\documentclass[runningheads,envcountsame,envcountsect]{llncs}

\usepackage[T1]{fontenc}

\title{Computing Fixpoints of Learned Functions: Chaotic Iteration and
  Simple Stochastic Games}

\titlerunning{Computing Fixpoints of Learned Functions} %

\author{Paolo Baldan\inst{1}\orcidlink{0000-0001-9357-5599} \and
  Sebastian Gurke\inst{2}\orcidlink{0009-0008-4343-1384} \and Barbara
  K\"onig\inst{2}\orcidlink{0000-0002-4193-2889} \and Florian
  Wittbold\inst{2}\orcidlink{0000-0001-8307-503X}}

\institute{Department of Mathematics, University of Padua, Italy
  \\\email{baldan@math.unipd.it} \and Universit\"at
  Duisburg-Essen\\\email{\{sebastian.gurke,barbara\_koenig,florian.wittbold\}@uni-due.de}}

\authorrunning{P. Baldan, S. Gurke, K\"onig and F. Wittbold} %

\usepackage[utf8]{inputenc}
\usepackage{amsmath}
\usepackage{amssymb}
\usepackage{nicefrac}
\usepackage{hyperref}
\usepackage{cleveref}
\usepackage{mathtools}
\usepackage{csquotes}  
\usepackage{caption}
\usepackage{subcaption}

\usepackage[appendix=append,bibliography=common]{apxproof}

\usepackage{stmaryrd}
\usepackage{orcidlink}
\usepackage{tikz,pgfplots}

\newtheoremrep{theorem}{Theorem}[section]
\newtheoremrep{maintheorem}[theorem]{Main Theorem}
\newtheoremrep{corollary}[theorem]{Corollary}
\newtheoremrep{lemma}[theorem]{Lemma}

\crefname{maintheorem}{theorem}{theorems}
\Crefname{maintheorem}{Theorem}{Theorems}

\newcommand{\R}{\mathbb{R}} %
\newcommand{\Rp}{\R_+}
\newcommand{\Ri}{\overline{\R}}
\newcommand{\Rpi}{\overline{\R}_+}
\newcommand{\N}{\mathbb{N}}

\newcommand{\supp}[1]{\mathrm{supp}\,#1}

\newcommand{\reg}{progressing}
\newcommand{\Reg}{Progressing}

\newcommand{\fix}[1]{\mathrm{Fix}(#1)}
\newcommand{\omegaclos}[1]{\overline{#1}}
\newcommand{\lr}{\alpha}
\newcommand{\df}{\beta}

\newcommand{\mystrut}{\raisebox{2ex}[0.4cm]{}}

\newcommand{\cut}[1]{}

\begin{document}

\maketitle

\begin{abstract}
  The problem of determining the (least) fixpoint of
  (higher-dimensional) functions over the non-negative reals
  frequently occurs when dealing with systems endowed with a
  quantitative semantics.
  We focus on the situation in which the functions of interest
  are not known precisely but can only be approximated.
  As a first contribution we generalize an iteration scheme called
  dampened Mann iteration, recently introduced in the literature.
  The improved scheme relaxes previous constraints on parameter
  sequences, allowing learning rates to converge to zero or not
  converge at all.
  While seemingly minor, this flexibility is essential to enable
  the implementation of chaotic iterations, where only a subset of
  components is updated in each step, allowing to tackle
  higher-dimensional problems.
  Additionally, by allowing learning rates to converge to zero, we can
  relax conditions on the convergence speed of function
  approximations, making the method more adaptable to various
  scenarios.
  We also show that dampened Mann iteration applies immediately to
  compute the expected payoff in various probabilistic models,
  including simple stochastic games, not covered by 
  previous work.
\end{abstract}

\section{Introduction}

This paper focuses on the problem of identifying fixpoint iteration
schemes
for determining the (least) fixpoint of a
$d$-dimensional function $f$ on the reals in scenarios where this
function is not known precisely, but can only be
approximated.
Concretely, we assume to be able to construct a
sequence of %
approximating functions
$f_1,f_2,f_3,\dots$ that converges to $f$.
The problem is non-trivial for a number of reasons, starting from the
fact that the least fixpoint operator is not necessarily continuous
and hence the sequence of least fixpoints of the functions $f_n$ does
not always converge to the least fixpoint of~$f$.

This task can be seen as an abstract formulation of the problem that
occurs in reinforcement learning, where a Markov Decision Process
(MDP) -- a transition system based on probabilistic and
non-deterministic branching as well as rewards -- is given and the aim
is to determine the expected return and the optimal strategy
to achieve it. In a real-world scenario the probabilities
(and possibly also the rewards) of such MDPs are often not known, but
can only be determined by exploring the MDP and
sampling. Reinforcement learning algorithms such as
Q-learning~\cite{WD:QL}, SARSA~\cite{r:sarsa} or
Dyna-Q~\cite{s:dyna-integrated-architecture,s:planning-incremental-dyn-prog,kaelbling1996reinforcement}
typically proceed by interleaving exploration of the MDP and strategy
computation.
These algorithms can be classified as model-based (e.g., Dyna-Q) --
where the (approximated) MDP model is explicitly computed -- and
model-free (e.g., SARSA, Q-learning) -- where updates are not dependent
on the estimated probabilities, but on the current sample.

Indeed, the expected return of a (finite state) MDP can be
characterized as the least fixpoint of a function based on the Bellman
equation.
This function is monotone with respect to the pointwise order on tuples
and non-expansive ($1$-Lipschitz) with respect to the supremum metric,
i.e., function application does not increase the distance of two points
in the space.
The fact that the same happens for a number of other fixpoint equations
of interest (e.g., for computing payoffs for simple stochastic
games~\cite{c:complexity-stochastic-game} and for behavioural
metrics~\cite{bblm:on-the-fly-exact-journal}) leads us to
develop our theory for monotone and non-expansive functions.

In~\cite{BGKPW:CAFDM}, we proposed a solution to this problem:
instead of computing fixpoints via Kleene iteration -- as 
common for such models -- the idea is to perform a so-called dampened
Mann iteration that adds a dampening factor to the well-known
Mann iteration~\cite{b:iterative-approximation-fixed-points}. 
Concretely the following iteration schema is proposed:
\[
  x_{n+1} = (1-\beta_n)\cdot\left(x_n + \lr_n\cdot(f_n(x_n)-x_n)
  \right)%
\]

The parameter $\lr_n$, often referred to as the learning rate, is
used to take a weighted sum between the previous value and the value
obtained by applying the $n$-th approximated function, making the
iteration more robust with respect to perturbations.
A central ingredient is the dampening factor $1-\beta_n$, inspired
by~\cite{KIM200551}. Its purpose is to decrease a value that for some
reasons might over-estimate the least fixpoint. This is essential when the function of interest admits more than one fixpoint
and the iteration in early phases can get stuck at a fixpoint that is
not the least.
According to~\cite{BGKPW:CAFDM}, the parameters $\lr_n,\beta_n$ have
to be chosen such that $\lr_n$ converges to $1$ (the iteration
scheme converges to a Kleene iteration) and $\beta_n$ converges to $0$
(the dampening is vanishing over time). For most results the first
condition can be relaxed to $\lr_n$ converging to some
$\alpha > 0$.

Then, the sequence $x_n$ generated by the above iteration scheme is
guaranteed to converge to the least fixpoint of $f$ in a number of
situations:
when the function $f$ is a (power) contraction,
when the the sequence
$(f_n)$ converges monotonically to $f$ and when $(f_n)$ converges
normally.
The latter means that $\sum_{n} \|f - f_n\| < \infty$ (where
$\|\cdot\|$ is the supremum norm), a condition which can always be
enforced by ``speeding'' up the iteration and determining a function
$g$ such that $f_{g(i)}$ converges fast enough and thus normally.
For instance, consider the sequence of functions
$f_n(x) = \nicefrac{1}{n} + \left(1-\nicefrac{1}{n}\right)\cdot x$ converging to the
identity $f(x) = x$.
It can be seen that $(f_n)$ does not converge normally as
$\|f_n-f\|=\nicefrac{1}{n}$ and, indeed, the dampened Mann
iteration scheme from~\cite{BGKPW:CAFDM} fails.
However, normal convergence can be enforced by considering a
subsequence $f_{g(n)}$ with $g(n) =n^2$.
Whenever approximations are realized by means of samplings,
this speedup can be obtained by increasing the number of
samplings performed between two iterations.
Interestingly, the mentioned paper also shows that, in the case of
MDPs, approximations obtained by sampling always guarantee convergence
of the dampened Mann iteration scheme, independently of the speed of
convergence.

The solution we provided in~\cite{BGKPW:CAFDM} has some drawbacks.
A relevant issue resides in the fact that it implicitly assumes that
all components are updated at each iteration.
Clearly, in applications such as reinforcement learning where large
models can produce functions working in dimensions of several thousands,
this becomes impractical if not impossible.
Chaotic iterations~\cite{FROMMER2000201} instead offer a scalable
alternative where only a subset (possibly only one) of the components
are updated at each iteration.
The components to be updated might be chosen, for example, to be the
ones that seem statistically more relevant to the goal (such as states
that are visited more often in sampled runs of a Markov decision
process) or that are dependent on components that were previously updated.
Using chaotic strategies can often result in less computation and while
not necessarily so, its faster update cycle can give better insight
into the convergence behaviour at runtime or even produce approximations
where full updates are infeasible.

While this might look like a minor implementation aspect, one can see
that generalizing to chaotic iteration requires quite a radical change
to the theory.
The parameters $\lr_n$ and $\beta_n$ have to become dependent on the
dimension (the state), i.e., they become themselves vectors.
If we write $z(i)$ for the $i$-th component of a vector $z$,
the iteration schemes becomes of the following form, where setting
$\lr_n(i) =\beta_n(i)=0$ we ensure that component $i$ is not updated
at step $n$:
\[
  x_{n+1}(i) = (1-\beta_n(i))\cdot\left(x_n(i) +
    \lr_n(i)\cdot(f_n(x_n)-x_n)(i) \right)
\]

Since in a proper chaotic iteration, components may not be updated
infinitely often, parameters $\lr_n(i)$ must be allowed to be $0$
infinitely often, which means that they either converge to $0$ or
do not converge at all.
This deeply deviates from the theory in~\cite{BGKPW:CAFDM}, which
heavily relies on the learning rate sequence $\lr_n$ converging to
$1$ or at least to a value strictly larger than $0$.

In this paper, we will indeed show that the iteration scheme can work with more
general parameter sequences, where the learning rate can converge to
$0$ or is even allowed not to converge.
This relaxation of the condition on the parameters requires a substantial
change in the proof techniques.

In addition to implementing chaotic iteration, the additional
flexibility for the parameter sequences can also be used to treat
different components of the system in a different way: possibly we
have rather precise knowledge about the values in one component and
know that we are close to the solution.
Then it makes sense to choose $\beta_n(i)$ and $\lr_n(i)$ close to zero.
On the other hand, it could be the case that in another component we have
little information and still have to iterate more and correct a potential
over-approximation.
Then the parameters should be chosen further away from $0$.

The more general schema allows us to match it with the practice of a
number of model-based reinforcement learning techniques.
For instance, Dyna-Q, a model-based extension of Q-learning, uses
chaotic iteration and considers a learning rate $\lr_n$ converging to $0$.
More generally, the present work might be a useful step also towards the
development of a theory encompassing model-free reinforcement learning
techniques.
In fact, in a model-free approach, such as Q-learning~\cite{WD:QL}, the
update is based on the current sample and not on the estimated probabilities
and rewards. Hence, in order to
converge to the solution, the learning rate $\lr_n$ must converge to
$0$, giving more and more weight to the previous estimate rather than
to the result of applying the function, otherwise the iteration would
oscillate (due to the varying samples) rather than converge.
Existing convergence results (e.g.,~\cite{bt:neuro-dynamic-programming,WD:QL}) for these algorithms typically
assume functions with unique fixpoints (as is the case for discounted
problems or systems with some termination guarantees) or
convergence to some fixpoint, while the case of convergence to the
least fixpoint is not treated.

Finally,  the possibility of setting a learning rate $\lr_n$
 converging to $0$ also offers the potential to relax the
condition on the speed of convergence of the sequence
$f_n$. While we required in~\cite{BGKPW:CAFDM} that $f_n$ converges normally to
$f$, i.e., $\sum_{n} \|f - f_n\| < \infty$, here we can relax the
condition to $\sum_{n} \lr_n\|f - f_n\| < \infty$,
which is easier to satisfy when $\lr_n$ tends to $0$
and thus the iteration relies less and less on the approximations.

Another relevant question -- unanswered in~\cite{BGKPW:CAFDM} -- is
whether dampened Mann iteration works -- out of the box -- for simple
stochastic games (SSGs). The mentioned paper only offers a solution,
sketched above, involving a speedup of the sampling process for ensuring
normal convergence. However, this is costly since it requires to
obtain many samples before being able to do the next iteration and
obtain the next estimate of the solution. Even
worse, the periods until the next estimate are getting longer and
longer since the number of needed samples might increase non-linearly.
The question of
whether one can instead directly apply dampened Mann iteration to
SSGs
was thus relevant and still open.

The results from~\cite{BGKPW:CAFDM} cannot
be straightforwardly be adapted to handle SSGs. The
previous proof strategy was to first show that dampened Mann iteration
works in the exact case and for the case of power contractions. From
there one is able to derive other results, for instance for MDPs
(where the fact is exploited that the Bellman equations of MDPs
without end components result in fixpoint functions that are power
contractions). Including SSGs in the picture and
accomodating for the generalized parameter sequences, requires
completely new proofs. We prove convergence of dampened Mann iteration
for sequences of functions converging monotonically and generalize
the result for power contractions to the case of Picard operators.
Together with the fact
that the least fixpoint operator is continuous for Markov Chains and
for SSGs, these results allow us to conclude that
dampened Mann iteration works for both MDPs and SSGs. For MDPs the novelty lies in the new types of parameters, while
for SSGs this result was completely open
in~\cite{BGKPW:CAFDM}.

Being able to treat
SSGs also ensures that we can
solve (least) fixpoint equations for behavioural
metrics~\cite{b:prob-bisimilarity-distances,bw:complexity-bisimilarity-pseudometric},
another important application
area. In~\cite{kt:approximate-bisimulation-minimization,FKPB:RPBLMC},
it was observed that a small perturbation of the transition
probabilities in a model can drastically change the distance and the
authors suggest an alternative, more robust, notion of distance. 
With our results it is now also possible to stay with the
original notion of pseudometric and use dampened Mann iteration with
increasingly better approximations to converge to the true distance.

In summary, our main contribution is a generalisation of the dampened
Mann iteration scheme proposed in previous work for approximating fixpoints
of functions arising from quantitative models. More precisely

\begin{itemize}
\item We allow the learning rate to converge to $0$ or not
  converge at all, enabling better adaptation to different convergence
  rates of function approximations.

\item We introduce a generalized dampened Mann iteration where the
  learning rate and dampening parameters can vary across
  dimensions, allowing for chaotic iteration strategies.

\item We establish convergence of the introduced schemes for simple
  stochastic games and their sampled approximations.
\end{itemize}

\section{Preliminaries and Notation}
We denote by $\R=(-\infty,\infty)$ the set of real numbers, by $\Rp=[0,\infty)$ the set of
non-negative reals, by $\Ri=[-\infty,\infty]$ and $\Rpi=[0,\infty]$ the sets of (non-negative)
real numbers including infinity, and by $\N_0$ and $\N$ the set of natural numbers
with and without $0$, respectively.

For sets $X,Y$, we will denote by $Y^X$ the set of functions $f\colon X\to Y$.
If $X$ is finite, we will identify $Y^X$ with the set of vectors $Y^{|X|}$. A sequence in $X$ is denoted by $(x_n)_{n \in \N_0}$ or simply by $(x_n)$ when the index is clear from the context.
For $x\in\R^d$ and $i\in\{1,\dots,d\}$, we will write $x(i)\in\R$ to denote its $i$-th component and,
if clear from the context, we will sometimes identify $x\in\R$ with the vector $(x,\dots,x)\in\R^d$.
For $x,y\in\R^d$, $x\cdot y=(x(1)y(1),\dots,x(d)y(d))\in\R^d$ will denote their component-wise 
product.

For $x,y\in\Ri^d$, we write $x\leq y$ for the pointwise (partial) order, i.e.,
$x(i)\leq y(i)$ for all $i\in\{1,\dots,d\}$.
A function $f\in Y^X$, where $X,Y\subseteq\Ri^d$, is monotone if, for all
$x,y\in X$, we have that $x\leq y$ implies $f(x)\leq f(y)$.
It is called $\omega$-continuous if, for all $(x_n)\subseteq X$ with $x_1\leq x_2\leq\dots$ and
$x=\sup_{n\in\N}x_n\in X$, we have $f(x)=\sup_{n\in\N}f(x_n)$.
Note that any $\omega$-continuous function is monotone.

We equip $\R^d$ with the \emph{supremum norm} defined as $\|x\|=\sup\{x(i)\mid i\in\{1,\dots,d\}\}$
for $x\in\R^d$ and extended to functions $f\in Y^X$, where $X,Y\subseteq\R^d$, by
$\|f\|=\sup_{x\in X}\|f(x)\|\in[0,\infty]$.
A function $f\in Y^X$ is $L$-Lipschitz
for a constant $L\in\Rp$ if, for all $x,y\in X$, $\|f(x)-f(y)\|\leq L\|x-y\|$.
It is non-expansive if it is $1$-Lipschitz.
It is a contraction if it is $q$-Lipschitz for contraction factor $q<1$.

A fixpoint of $f\colon X\to X,X\subseteq\Rp^d$, is $x\in X$ such that
$f(x)=x$.  We denote the set of fixpoints of $f$ by $\fix{f}$ and, in
case it exists, the \emph{least fixpoint} of $f$ by $\mu f$.  We will
be interested in approximating least fixpoints of non-expansive and
monotone functions $f\colon X\to X$ where $X$ is a \emph{$0$-box},
i.e.\ a set of the form
\begin{equation}
  \label{eq:domain-form}
  X=\{x\in\Rp^d\mid x\leq x^*\} \qquad \qquad \mbox{for some $x^*\in\Rpi^d$.}
\end{equation}

Each such function $f$ extends  to an $\omega$-continuous function
$\omegaclos{f}\colon\Rpi^d\to\Rpi^d$ given by $\omegaclos{f}(x)=\sup\{f(y)\mid y\leq x\}$.

Using Kleene's fixpoint theorem, $\omegaclos{f}$ has a least fixpoint and
$\fix{f}\neq\emptyset$ iff $\mu\omegaclos{f}\in X$ in which case
$\mu f=\mu\omegaclos{f}$.  By abusing the notation, we will
simply write $\mu f$ instead of $\mu\omegaclos{f}\in\Rpi^d$.

A function $f\colon X \to X$ is a power contraction if there is some $k\in\mathbb{N}$ such that the $k$-fold iteration $f^k$ is a contraction.

The map $f$ is called a Picard operator if it
has a unique fixpoint $x^*\in X$ and the sequence $(f^n(x))$
converges to $x^*$ for every $x\in X$.
If $f\colon X\to X$ is a contraction on a closed (hence complete) set
$X\subseteq\Rp^d$, then, by Banach's theorem, it is a Picard operator.

\paragraph{Dampened Mann iteration.}

For the rest of the paper, we will be interested in approximating
least fixpoints of non-expansive and monotone functions
$f\colon X\to X$, where $X$ is a $0$-box, given only approximations
$f_n\colon X\to X$ of $f$.  In~\cite{BGKPW:CAFDM}, we suggested the
use of a dampened Mann iteration, which is a variation of Mann
iteration scheme with the addition of dampening factor.

\begin{definition}[Mann scheme]
  \label{de:dampened-mann}
  A \emph{(dampened) Mann scheme} $\mathcal{S}$ is a pair
  $((\lr_n),(\beta_n))$ of parameter sequences in $[0,1)$,
  referred to as \emph{learning rates} and \emph{dampening factors},
  respectively, such that
  \begin{align}
    \label{eq:beta-condition-1}
    \lim_{n \to \infty} \beta_n &= 0, \\
    \label{eq:beta-condition-2}
    \sum_{n \in \mathbb{N}_0} \beta_n &= \infty 
    \quad \text{(or equivalently, $\prod_{n \in \mathbb{N}_0} (1 - \beta_n) = 0$)}.
  \end{align}
  Given a $0$-box $X\subseteq\Rp^d$ and a sequence
  $(f_n)$ of functions $f_n\colon X\to X$,
  a Mann scheme $\mathcal{S}$ generates a class of sequences $(x_n)$ with $x_0 \in X$  chosen aritrarily and 
  \begin{equation}
    \label{eq:iteration} 
      x_{n+1} = (1-\beta_n)\cdot\left(x_n + \lr_n\cdot(f_n(x_n)-x_n) \right)
  \end{equation}
  
  A Mann scheme $\mathcal{S}$ is  \emph{exact} if for all monotone and
  non-expansive $f\colon X\to X$,
  setting $f_n=f$ for all $n$, the sequences $(x_n)$ generated by
  $\mathcal{S}$ converge to $\mu f$.
\end{definition}

Intuitively, the learning rates determine to what extent the update
$f(x_n)-x_n$ is applied to the current guess $x_n$, ranging from a
full Kleene update ($\lr_n=1$, implying $x_{n+1} = (1-\beta_n) f_n(x_n)$) to no update at all ($\lr_n=0$, implying $x_{n+1}=(1-\beta_n) x_n$).
The dampening factors ensure that over-approximations of $\mu f$ introduced by starting from $x_0$ instead of $0$ and using $f_n$ instead of $f$, and which
would lead to possible divergence of a Kleene iteration, can be
dampened. Condition~\eqref{eq:beta-condition-1} ensures that dampening
eventually reduces -- meaning that, in the long run, when we are close
to the least fixpoint of $f$, we stay close to it.  On the other
hand, Condition~\eqref{eq:beta-condition-2} guarantees that at any
stage there is still enough ``dampening power'' left to correct
possible over-approximations.

In~\cite{BGKPW:CAFDM}, we showed several convergence results for
\emph{Mann-Kleene schemes}, i.e., Mann schemes such that
$\lim_{n\to\infty} \lr_n = 1$ and \emph{relaxed Mann-Kleene
  schemes} where $\lim_{n\to\infty} \lr_n > 0$.\footnote{Note that in~\cite{BGKPW:CAFDM} $\alpha_n$ is replaced with $1-\alpha_n$.
  Here, we chose this notation instead as it better reflects the widespread interpretation of $\alpha_n$
  as a learning rate.}
More precisely, it shows that (relaxed) Mann-Kleene schemes are exact and it proves convergence in the ``approximated case'', i.e., when we use at each iteration approximations from a sequence $(f_n)$ converging to $f$, with
conditions on the target functions $f$ (power contraction) or on the mode
of convergence of the sequence $(f_n)$  (monotone or normal convergence).

\section{Generalizing Dampened Mann Iteration}
\label{sec:dmi}

In this section we generalize the convergence results for Mann schemes
from~\cite{BGKPW:CAFDM}. First, in \S\ref{sec:decreasing-lrs} we
overcome the restriction to Mann-Kleene schemes,
and show
convergence results for learning rates which are allowed to converge
to $0$ or to not converge at all. Relying on this, in
\S\ref{ss:gen-mann} we further generalize the scheme by allowing
the parameter values $\lr_n,\beta_n$ to depend on the dimension,
thus, in particular, enabling a chaotic iteration. While
the results for chaotic iteration in \S\ref{ss:gen-mann} are
for the exact case, i.e., the function $f$ is known and used at each
iteration, the extension to the case in which $f$ can be only
approximated via a sequence $(f_n)$ is described in
\S\ref{sec:approx}.

\subsection{Schemes With Non-Converging Learning Rates}
\label{sec:decreasing-lrs}

We start by lifting the restriction to \emph{(relaxed) Mann-Kleene}
schemes, showing that convergence can be guaranteed also when the
learning rate does not converge or converges to zero. In addition,
while~\cite{BGKPW:CAFDM}
restricts to functions
$f\colon X\to X$ with $\mu f<\infty$, here we allow the fixpoint to be
infinite in some (or all) dimensions.

\begin{definition}[{\Reg} scheme]
  A Mann scheme
  $\mathcal{S}=((\lr_n),(\df_n))$ is called
  \emph{\reg}
  if, assuming $\nicefrac{0}{0}=0$ and $\nicefrac{x}{0}=\infty$ for $x>0$, it holds
  \begin{equation}
    \label{ass:quotient}
    \lim_{n\to\infty}\nicefrac{\beta_n}{\lr_n} = 0.
  \end{equation}
\end{definition}

Intuitively, Condition~\eqref{ass:quotient} ensures that the scheme eventually prioritizes updates
over dampening.

Observe that Condition (\ref{eq:beta-condition-1}) of Definition~\ref{de:dampened-mann}, i.e.,
$\lim_{n \to \infty} \beta_n = 0$, is implied by Condition (\ref{ass:quotient}) above, since
$\nicefrac{\beta_n}{\lr_n} \geq \beta_n$.
\cut{Moreover, Conditions~\eqref{ass:quotient} and with Condition~\eqref{eq:beta-condition-2}
($\sum_n \beta_n=\infty$) also implies $\sum_{n \in \N_0} \lr_n = \infty$, since $\lr_n \geq \beta_n$, asymptotically.}
Note also that there is no
 convergence requirement on $\lr_n$ and thus, in particular,
  $\lr_n$ is allowed to tend to $0$.

  Canonical choices of the parameters are $\beta_n=\nicefrac{1}{n}$
  and either $\lr_n=1$ (Kleene updates at every step) or
  $\lr_n=\nicefrac{1}{n^{\varepsilon}}$ with $0<\varepsilon<1$
  (decreasing learning rates).

We first show that, when the exact function $f$ is known and can be used at each iteration, progressing schemes ensure convergence to the least fixpoint.

\begin{theoremrep}[{\Reg} is exact]
  \label{thm:old-exact}
  {\Reg} Mann schemes are exact.
\end{theoremrep}
\begin{proof}
  The theorem follows immediately from the proof of~\Cref{thm:exact}
  for the generalized Dampened Mann iteration.
  \qed
\end{proof}

\begin{toappendix}
It is worth noting that the Theorem~\ref{thm:old-exact}
holds for arbitrary closed and convex domains $X$
with $0\in X$ making this a generalisation of the results
in~\cite{BGKPW:CAFDM}. The proof would remain exactly the same.
In fact, the fact that the restriction of the domain of the functions
to $0$-boxes (cf. Condition~\eqref{eq:domain-form}) is only required
for the generalized iteration used in the sequel to be well-defined
and it is not actually used in any step of the proof.
\end{toappendix}

When $f$ can only be
approximated, the following generalisation of~\cite[Theorem
4.4.2]{BGKPW:CAFDM} shows the benefits of using decreasing
learning rates.

\begin{corollaryrep}[Approximated function]
  \label{cor:normal}
  Let $\mathcal{S}$
  be a {\reg} Mann scheme, $f\colon X\to
  X$ a monotone and non-expansive map over a $0$-box
  $X\subseteq\Rp^d$, and $(f_n)$ a sequence of maps $f_n\colon X\to
  X$ such that $\sum \lr_n \|f - f_n\| < \infty$.
  Then
  the sequences $(x_n)$ generated by $\mathcal{S}$ on $(f_n)$ satisfy $x_n \to \mu f$.
\end{corollaryrep}
\begin{proof}
    We write $(y_n)$ for the sequence generated by~\eqref{eq:iteration} with the exact function,
    i.e., $f_n=f$ for all $n\in\N_0$.

    As in the proof of  in~\cite[Theorem 4.4.2]{BGKPW:CAFDM}, one can show that
    \[\|x_{n+1}-y_{n+1}\|\leq\sum_{i=0}^n\left(\prod_{j=i}^n (1-\beta_j)\right)\lr_i\|f_i-f\|.\]
    Now, the right-hand side converges to $0$: By assumption, for $\varepsilon>0$, there is
    $N\in\N_0$ such that $\sum_{i=N}^\infty\lr_i\|f_i-f\|\leq\nicefrac{\varepsilon}{2}$.
    Choosing $M\in\N_0$ such that
    $\prod_{j=N}^M(1-\beta_j)\leq\varepsilon/(2N\max_{n\in\N_0}\lr_n\|f_n-f\|)$, we get for
    all $n\geq M$ that
    \begin{align*}
        \MoveEqLeft[-1]\sum_{i=0}^n\left(\prod_{j=i}^n (1-\beta_j)\right)\lr_i\|f_i-f\|\\
        &\leq\sum_{i=0}^{N-1}\left(\prod_{j=i}^n (1-\beta_j)\right)\lr_i\|f_i-f\|+\sum_{i=N}^\infty\left(\prod_{j=i}^n (1-\beta_j)\right)\lr_i\|f_i-f\|\\
        &\leq N(\max_{n\in\N_0}\lr_n\|f_n-f\|)(\prod_{j=N}^M(1-\beta_j))+\sum_{i=N}^{\infty}\lr_i\|f_i-f\|\\
        &\leq\nicefrac{\varepsilon}{2}+\nicefrac{\varepsilon}{2}=\varepsilon.
    \end{align*}

    Thus, we have $\|x_n-y_n\|\to 0$ for $n\to\infty$.  As,
    by~\Cref{thm:old-exact}, we have $y_n\to\mu f$, this concludes the
    proof.
    \qed
\end{proof}

This significantly improves~\cite[Theorem 4.6]{BGKPW:CAFDM} which requires
$\sum \|f - f_n\| < \infty$, since when $\lr_n \to 0$, then $\sum \lr_n\|f - f_n\|$ is, of
course, more likely to be bounded. Indeed, for any sequence $(f_n)$ converging to $f$ it can be shown that there is a progressing Mann scheme working for any starting point (see Corollary~\ref{co:alw-work}).

\begin{toappendix} 
  \begin{corollary}
    \label{co:alw-work}
    Let $f\colon X \to X$ be a monotone and non-expansive map and $(f_n)$ be a sequence
    of maps $f_n\colon X \to X$ with $\|f_n - f\| \to 0$.
    Then, there are sequences $(\lr_n)$ and $(\beta_n)$ such that, for every initial value
    $x_0\in X$, the sequence defined by~\eqref{eq:iteration} converges to $\mu f$.
\end{corollary}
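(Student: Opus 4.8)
The plan is to reduce the statement to \Cref{cor:normal}: it suffices to exhibit a \reg{} Mann scheme $\mathcal{S}=((\lr_n),(\df_n))$ with $\sum_n \lr_n\|f-f_n\|<\infty$, for then \Cref{cor:normal} immediately gives that every sequence generated by $\mathcal{S}$ on $(f_n)$ converges to $\mu f$, and by \Cref{de:dampened-mann} these are exactly the sequences obtained from~\eqref{eq:iteration} with arbitrary $x_0\in X$.

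The key point is that a suitable scheme need not exist among the \emph{relaxed Mann--Kleene} schemes of~\cite{BGKPW:CAFDM}: if $\lr_n$ is bounded away from $0$ then, since $\sum_n\lr_n=\infty$ is already forced by~\eqref{eq:beta-condition-2} and~\eqref{ass:quotient}, one gets $\sum_n\lr_n\|f-f_n\|=\infty$ whenever $\sum_n\|f-f_n\|=\infty$ (e.g.\ $\|f-f_n\|$ of order $\nicefrac{1}{n}$). This is exactly where the flexibility of \S\ref{sec:decreasing-lrs} --- allowing $\lr_n=0$ --- becomes indispensable, and it suggests a \emph{sparse update schedule}. Writing $d_n=\|f-f_n\|$, since $d_n\to 0$ we may pick indices $n_1<n_2<n_3<\dots$ with $d_{n_k}\le 2^{-k}$ for every $k$, and set
\[
  \lr_n=\begin{cases}\nicefrac{1}{2}&\text{if }n=n_k\text{ for some }k,\\ 0&\text{otherwise,}\end{cases}
  \qquad
  \df_n=\begin{cases}\nicefrac{1}{k+1}&\text{if }n=n_k\text{ for some }k,\\ 0&\text{otherwise.}\end{cases}
\]

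It then remains to carry out the routine verifications. First, $\mathcal{S}$ is a \reg{} Mann scheme: plainly $\lr_n,\df_n\in[0,1)$; $\df_n\to 0$ and $\lim_n\nicefrac{\df_n}{\lr_n}=0$, since along the sparse indices $\nicefrac{\df_{n_k}}{\lr_{n_k}}=\nicefrac{2}{k+1}\to 0$ while elsewhere the quotient is $\nicefrac{0}{0}=0$; and $\sum_n\df_n=\sum_k\nicefrac{1}{k+1}=\infty$, which is~\eqref{eq:beta-condition-2}. Second, $\sum_n\lr_n d_n=\sum_k\nicefrac{1}{2}\,d_{n_k}\le\sum_k 2^{-(k+1)}<\infty$. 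Hence \Cref{cor:normal} applies and yields $x_n\to\mu f$ for every starting point $x_0\in X$. (Morally, the sparse schedule runs a dampened Mann iteration on the normally convergent subsequence $(f_{n_k})$, updating only at the times $n_k$.)

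I do not anticipate a genuine obstacle; the only non-routine idea is recognizing that one must allow $\lr_n=0$ on a co-infinite set of steps --- i.e.\ use a chaotic/sparse schedule rather than a Mann--Kleene scheme --- which is precisely the extra freedom that makes the conclusion hold for \emph{arbitrary} convergent $(f_n)$ and arbitrary $x_0$.
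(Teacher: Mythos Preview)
Your proof is correct and follows the same overall strategy as the paper --- reduce to \Cref{cor:normal} by exhibiting a \reg{} scheme with $\sum_n \lr_n\|f-f_n\|<\infty$ --- but the construction of the scheme itself is genuinely different. The paper invokes two abstract existence facts from analysis: first, since $\epsilon_n\coloneqq\|f_n-f\|\to 0$, there is some $(\lr_n)$ with $\sum\lr_n=\infty$ and $\sum\lr_n\epsilon_n<\infty$; second, given any such $(\lr_n)$, there is $(\df_n)$ with $\sum\df_n=\infty$ and $\df_n/\lr_n\to 0$ (citing~\cite{Ash01091997}). Your sparse-schedule construction is more explicit and self-contained, and it highlights that the new freedom to set $\lr_n=0$ on a coinfinite set --- one of the paper's contributions over~\cite{BGKPW:CAFDM} --- already suffices on its own, whereas the paper's route presumably yields strictly positive $\lr_n$ throughout. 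Both arguments are short; yours trades the two cited lemmas for a single subsequence pick, while the paper's stays within classical series-convergence territory.
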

\begin{proof}
Define $\epsilon_n\coloneqq\|f_n - f\|$.
We just have to combine the above corollary with the following simple facts from analysis:
\begin{itemize}
\item Since $\epsilon_n \to 0$, there exists a sequence $(\lr_n)$ with $\sum \lr_n = \infty$
    and $\sum \lr_n \epsilon_n < \infty$.
\item Now, there exists a sequence $(\beta_n)$ with $\sum \beta_n = \infty$ and
    $\lim \nicefrac{\beta_n}{\lr_n} = 0$.~\cite{Ash01091997}
\end{itemize}
With these sequences we can apply the above corollary.
\qed
\end{proof}
\end{toappendix}

\begin{example}
  \label{ex:weigthed-convergence}
  Consider the identity $f\colon [0,1] \to [0,1]$ and a sequence
  of approximating functions $f_n\colon [0,1] \to [0,1]$ defined by
  $f_n(x) = (1-\nicefrac{1}{n})x+\nicefrac{1}{n}$.  Then one can see that any sequence $(x_n)$
  generated, starting from $x_0 = 1$, by a relaxed Mann-Kleene scheme with
  dampening factors $\beta_n = \nicefrac{1}{n}$ does not converge to the least
  fixpoint $\mu f = 0$, for any choice of the learning rates $\lr_n$
  \cut{(which, in a relaxed Mann-Kleene scheme must satisfy
    $\lim_{n \to\infty} \lr_n = \alpha > 0$)} (see
  Remark~\ref{rm:weighted-convergence} for the details of the calculation).
  Instead, by Corollary~\ref{cor:normal}, for any $0 < \epsilon < 1$
  the sequence obtained with learning rates
  $\lr_n\coloneqq \nicefrac{1}{n^\varepsilon}$ converges to $\mu f$.
\end{example}
 
\begin{toappendix}
  \begin{remark}
    \label{rm:weighted-convergence}
      Consider identity function $f\colon [0,1] \to [0,1]$ and a sequence
  of approximating functions $f_n\colon [0,1] \to [0,1]$ defined by
  $f_n(x) = (1-\nicefrac{1}{n})x+\nicefrac{1}{n}$.  Consider a sequence $(x_n)$ generated by a
  Mann relaxed Mann-Kleene with dampening factors $\beta_n = \nicefrac{1}{n}$,
  starting from $x_0 = 1$. Then one can see that $(x_n)$
  does not converge to the least fixpoint $\mu f = 0$, for whatever the choice of the
  learning rates $\lr_n$ (which, in a relax Mann-Kleene scheme must satisfy d
  $\lim_{n \to\infty} \lr_n = \alpha > 0$).
  In fact, by unrolling the definition of $f_n$ one obtains
    \[x_{n+1} = \left(1 - \frac1n\right) \cdot \Big[ x_n \Big(1- \frac{\lr_n}{n}\Big) + \frac{\lr_n}{n}\Big]\]

    Then, one can easily see that $x_1 = 0$ and $x_n > 0$ for all
    $n \geq 2$. Now, assuming that $x_n$ converges to
    $0$ one can derive that the sequence is eventually increasing, reaching a contradiction.

    More in detail, assume that $x_n \to 0$ and let
    $0 < \epsilon < \nicefrac{\alpha}{4}$. Then, there is $n_0$ such that for all
    $n \geq n_0$, it holds
  \begin{center}
    $|\alpha - \lr_n| \leq \epsilon$ and $x_n \leq \displaystyle \frac{\alpha-3\epsilon}{4}$
  \end{center}

  Therefore, for all $n \geq n_0$ it holds
  \begin{align*}
    x_{n+1}
    & = \left(1 - \frac1n\right) \cdot \Big[ x_n \Big(1- \frac{\lr_n}{n}\Big) + \frac{\lr_n}{n}\Big]
    & \mbox{[since $|\lr_n- \alpha| \leq \epsilon$]}\\
    & \geq \left(1 - \frac1n\right) \cdot \Big[ x_n \Big(1- \frac{\alpha+\epsilon}{n}\Big) + \frac{\alpha-\epsilon}{n}\Big]\\
    & = \left(1 - \frac1n\right) \cdot \Big[ x_n - \frac{\alpha+\epsilon}{n} x_n + \frac{\alpha-\epsilon}{n}\Big]
    & \mbox{[since $x_n \leq \frac{\alpha-3\epsilon}{4} \leq \frac{1}{4}$]}\\
    & \geq \left(1 - \frac1n\right) \cdot \Big[ x_n - \frac{\alpha+\epsilon}{4n} + \frac{\alpha-\epsilon}{n}\Big]\\
    & = \left(1 - \frac1n\right) \cdot \Big[ x_n + \frac{3\alpha -5\epsilon}{4n} \Big]\\
    & = x_n - \frac{1}{n} x_n + \left(1 - \frac1n\right) \frac{3\alpha -5\epsilon}{4n} \\
    & \geq x_n 
  \end{align*}
  where the last passage is motivated by the fact that 
  \begin{align*}
    & - \frac{1}{n} x_n + \left(1 - \frac1n\right) \frac{3\alpha -5\epsilon}{4n}
    & \mbox{[since $x_n \leq \frac{\alpha-3\epsilon}{4}$]}\\
    & \geq - \frac{1}{n} \frac{\alpha-3\epsilon}{4} + \left(1 - \frac1n\right) \frac{3\alpha -5\epsilon}{4n}\\    
    & = ((n-1) (3\alpha-5\epsilon) - n(\alpha-3\epsilon))/4n^2\\
    & = (2n (\alpha-\epsilon) - 3 \alpha + 5\epsilon)/4n^2\\
    & \geq 0
  \end{align*}
  Namely, the sequence is eventually increasing and always greater than $0$, contradicting the fact that it converges to $0$.
\end{remark}
\end{toappendix}

\subsection{Chaotic Dampened-Mann Iteration}
\label{ss:gen-mann}

We now generalize Mann schemes
by allowing parameters $\lr_n,\beta_n$ to depend not only on the step,
but also on the dimensions.
This establishes the basis for performing chaotic iteration to approximate the least fixpoint.

The formal switch is, again, apparently minor: dampening and learning parameters become vectors instead of scalars.

\begin{definition}[Generalized Mann scheme]
  \label{de:dampened-mann-gen}
  A \emph{generalized (dampened) Mann scheme}
  $\mathcal{S}=((\lr_n),(\beta_n))$ is a pair of
  parameter sequences where the \emph{learning rates} $(\lr_n)$ and
  \emph{dampening factors} $(\df_n)$ are sequences of
  \emph{vectors} in $[0,1)^d$, such that
  conditions (\ref{eq:beta-condition-1})
  $\lim_{n \to \infty} \beta_n = 0$ and (\ref{eq:beta-condition-2})
  $\sum_{n \in \mathbb{N}_0} \beta_n = \infty$ of
  Definition~\ref{de:dampened-mann} hold pointwise.
\end{definition}

Notions introduced for Mann schemes naturally extend to generalized
ones. The sequence generated by a generalized scheme $\mathcal{S}$ is still defined by
(\ref{eq:iteration}), but interpreted pointwise.
\cut{Note that the restriction
to sets $X$ to $0$-boxes (Condition~\eqref{eq:domain-form}) is now required for the
iteration to remain well-defined.}
Similarly, the notion of \emph{exactness} naturally extends to
generalized Mann schemes.

One might think that in order to ensure convergence one could just
extend the notion of {\reg} schemes to the generalized case by interpreting
Condition~\eqref{ass:quotient} pointwise.
The following example shows that this does not work in general.

\begin{example}
  Consider
  $f\colon [0,1]^2 \to [0,1]^2$ with
$f(x, y) = (\max(x,y), \max(x,y))$ and the (vector-)sequences
$(\alpha_n), (\beta_n)$ with $\alpha_n = (1, 1)$ and
\[\beta_n(1) = \begin{cases} \nicefrac{1}{(n+2)} & \text{if $n$ is even} \\ \nicefrac{1}{(n+2)^2} & \text{if $n$ is odd} \end{cases} \qquad \beta_n(2) = \begin{cases} \nicefrac{1}{(n+2)^2} & \text{if $n$ is even} \\ \nicefrac{1}{(n+2)} & \text{if $n$ is odd} \end{cases}\]
Clearly, $\mu f = (0, 0)$, but it is easy to see that the sequence
$(x_n)$ generated by dampened Kleene Iteration with these parameters
from the starting point $(1, 1)$ will not converge to $(0,0)$, because
the positive number $\prod_{n=2}^\infty (1-\nicefrac{1}{n^2})$ is a
lower bound for at least one component of all $x_n$.
\end{example}

The crux is that we need sufficiently strong dampening, uniformly
across all components. Otherwise, thinking of applications to
state-based systems, it might happen that two states \emph{convince
  each other} of a too large value by passing it on between each other
and, at each step, one of them preserves the over-approximation while
the other one gets dampened.

For this reasons, we extend the notion of {\reg} schemes as follows:
\begin{definition}[{\Reg} generalized Mann scheme]
  \label{de:generalized-Mann-scheme}
  A generalized Mann scheme
  $\mathcal{S}=((\lr_n),(\beta_n))$ is
  \emph{\reg}
  if condition below holds pointwise
  \begin{equation}
    \lim_{n\to\infty}\nicefrac{\beta_n}{\lr_n} = 0
  \end{equation}
  and there exists a strictly increasing sequence $(m_k)$ such that,
   for all $k\in\N_0$ and $i\in\{1,\dots,d\}$, 
  $U_k(i)\coloneqq\{\ell\in\{m_k,\dots,m_{k+1}-1\}\mid \lr_\ell(i)>0\}\neq\emptyset$
  and, letting $\ell_k(i)\coloneqq\max U_k(i)$,
  we have
  \begin{equation}\label{MainTheoremExtraCondition}
    \sum_k \min\{\beta_{\ell_k(i)}(i) \mid i=1,\dots,d\} = \infty
  \end{equation}
\end{definition}

Intuitively, $U_k(i)$ is the set of steps in
$\{m_k, \ldots, m_{k+1}-1\}$ in which component $i$ has been updated
and $\ell_k(i)$ is the last update in the interval. Hence
Condition~\eqref{MainTheoremExtraCondition} requires that the
dampening factors in each component decrease in some sense fairly:
There is a sequence of time steps such that in between any two
subsequent time steps, each component is updated at least once and,
considering only the last updates in each component, the worst
dampening factors still sum up to infinity.
This is a natural extension of the non-generalized case: a (non-generalized) progressing scheme using the same parameter sequences in each component
fulfills~\eqref{MainTheoremExtraCondition} for the sequence $m_k=k$.

\begin{toappendix}
The following result will allow us to restrict some of our proofs
to the case $X=\Rp^d$.

\begin{lemmarep}
  \label{lem:subset}
    Given a $0$-box $X\subseteq\Rp^d$ and a monotone and non-expansive function
    $f\colon X\to X$, then the function $F\colon \Rp^d\to\Rp^d$ given by
    \[F(x)=\sup\{f(y)\mid y\in X,y\leq x\}\]
    is a monotone and non-expansive extension of $f$ (with $\mu f=\mu\overline{f}$).
\end{lemmarep}
\begin{proof}
  First, we notice that $F$ is indeed an extension of $f$ since $f$ is monotone and thus,
  for $x\in X$, we have $f(x)\geq f(y)$ for all $y\leq x$, whence $F(x)=f(x)$.

  Furthermore, it is easy to see that $F$ is monotone since $x\leq x'$ implies
  $\{f(y)\mid y\in X,y\leq x\}\subseteq\{f(y)\mid y\in X,y\leq x'\}$.

  To show that $F$ is non-expansive, let $x'\in\Rpi^d$ be such that $X=\{x\in\Rp^d\mid x\leq x'\}$.
  We define the function $P_f:\Rp^d\to X$ as
  \[P_f(x)=\min(x,x')\]
  and note that $P_f$ is non-expansive and $F(x)=f(P_f(x))$.
  But then, for $x,\hat{x}\in\Rp^d$
  \[\|F(x)-F(\hat{x})\|\leq\|f(P_f(x))-f(P_f(\hat{x}))\|\leq\|P_f(x)-P_f(\hat{x})\|\leq\|x-\hat{x}\|.\]
  Lastly, since $F$ is an extension of $f$, we have $\mu F\leq\mu f$.
  But then, in particular, $\mu F\in X$ by definition of $X$ and thus we must have $\mu F=\mu f$.
  \qed
\end{proof}
\end{toappendix}

Our first main result is the convergence of generalized dampened Mann
iteration to the least fixpoint of $f$ when the exact function $f$ is
known.

\begin{maintheoremrep}[Exact Case]
  \label{thm:exact}
    {\Reg} generalized Mann schemes are exact.
\end{maintheoremrep}
\begin{proofsketch}
The statement is shown in two parts:
\begin{itemize}
\item First we show that $\limsup_{n\to\infty} x_n \le \mu f$. This is done by comparing $(x_n)$ to a sequence $(y_n)$ that is generated with the same parameters but starts from the bottom element $y_n = 0$. It is clear that $(y_n)$ always stays below the least fixpoint of $f$ and one can show that the distance $\|x_n - y_n\|$ between the two sequences converges to zero. Here we need our additional assumption that the dampening is uniform over all components.
\item In the more interesting part of the proof we show that $\limsup_{n\to\infty} x_n \ge \mu f$. To this end we construct, by transfinite induction, a sequence $(\boldsymbol{t}_\alpha)$ of \emph{stepping stones} below $\mu f$ with the property that $(x_n)$ is eventually above $\boldsymbol{t}_\alpha$. One might hope that $(\boldsymbol{t}_\alpha)$ is monotonically increasing and converges to $\mu f$. We do not achieve monotonicity in the proof but we show that the sequence can be constructed in a way that the sequence of sums $\boldsymbol{t}_\alpha(1) + \dots + \boldsymbol{t}_\alpha(d)$ is strictly increasing. That means, even if we loose progress towards $\mu f$ in one dimension, the overall progress in the other components still makes up for it. For this we need the property that the dampening factors decrease more rapidly than the learning-rates. The fact that there is always an overall gain in the sum of the components of $\boldsymbol{t}_\alpha$ (together with the first part of the proof) allows us to conclude that the sequence $(x_n)$ converges to $\mu f$.
  \qed
\end{itemize}
\end{proofsketch}
\begin{proof}
    By~\Cref{lem:subset}, there exists a monotone and non-expansive extension $\overline{f}$ of $f$ with
    the same least fixpoint.
    Since $x_n\in X$ for all $n\in\N_0$, the sequence generated by~\eqref{eq:iteration} stays the
    same when replacing $f$ with $\overline{f}$, whence we can, w.l.o.g., assume that $X=\Rp^d$.

    We first show that $\limsup_{n\to\infty} x_n \le \mu f$. To this end we look at the sequence $(y_n)$ with $y_0 = (0,\dots,0)$ and the same recursive rule
\[y_{n+1}(j) = (1- \beta_n(j)) \cdot \big( (1-\lr_n(j)) y_n(j) + \lr_n(j) f(y_n)(j)\big)\]
It is clear that $y_n \le \mu f$ for all $n\in\N$ and therefore it suffices to show that $\|y_m - x_m\| \to 0$. To this end we show by induction on $N$ that
\[\|y_{m_N} - x_{m_N}\| \le \|y_0 - x_0\| \cdot \prod_{n<N} \left(1 - \min\{\beta_{\ell_n(i)}(i) \mid i=1,\dots,d\}\right)\]
Assume that it holds for $N$ and consider $j$ with $1 \le j \le d$. We claim that
\[|y_{m_{N+1}}(j) - x_{m_{N+1}}(j)| \le \|y_0 - x_0\| \cdot \prod_{n<N+1} \left(1 - \min\{\beta_{\ell_n(i)}(i) \mid i=1,\dots,d\}\right)\]
Note that by non-expansiveness of $f$ and definition of the iteration scheme, we have
$\|y_{k+1}-x_{k+1}\|\leq\|y_k-x_k\|$ for any $k\in\N_0$.
In particular, by the inductive hypothesis, we have
\begin{align*}
  \|y_{\ell_N(j)} - x_{\ell_N(j)}\| & \le \|y_{m_N}-x_{m_N}\|\\
  & \leq \|y_0 - x_0\| \cdot \prod_{n<N} \left(1 - \min\{\beta_{\ell_n(i)}(i) \mid i=1,\dots,d\}\right)
\end{align*}
Consequently
\begin{align*}
  & |y_{\ell_N(j)+1}(j) - x_{\ell_N(j)+1}(j)| \\
  &\le \|y_0 - x_0\| \cdot (1-\beta_{\ell_N(j)}(j)) \prod_{n<N} \left(1 - \min\{\beta_{\ell_n(i)}(i) \mid i=1,\dots,d\}\right)\\
&\le \|y_0 - x_0\| \cdot \prod_{n<N+1} \left(1 - \min\{\beta_{\ell_n(i)}(i) \mid i=1,\dots,d\}\right)
\end{align*}
Since $\lr_q(j) = 0$ for all $q$ with $\ell_N(j) < q < m_{N+1}$ this implies that also
\[|y_{m_{N+1}}(j) - x_{m_{N+1}}(j)| \le \|y_0 - x_0\| \cdot \prod_{n<N+1} \left(1 - \min\{\beta_{\ell_n(i)}(i) \mid i=1,\dots,d\}\right)\]
Because $j$ was arbitrary, we can indeed conclude that
\[\|y_{m_{N+1}} - x_{m_{N+1}}\| \le \|y_0 - x_0\| \cdot \prod_{n<N+1} \left(1 - \min\{\beta_{\ell_n(i)}(i) \mid i=1,\dots,d\}\right)\]
    Since $\sum_n\min\{\beta_{\ell_n(i)}(i)\mid i=1,\dots,d\}=\infty$, it follows that $\|y_m-x_m\|\to0$ and thus $\limsup_{n\to\infty}x_n\leq\mu f$ as desired.

    Next, we show that $\liminf_{n\to\infty} x_n \ge \mu f$ if $\mu f<\infty$.
    We may assume that $\mu f \neq 0$ (otherwise the statement is clear).
    Furthermore, we note that there can only be finitely many $n$ with $\lr_n=0<\df_n$ whence we
    can, w.l.o.g., assume that $\df_n=0$ if $\lr_n=0$.

    By the above, the sequence $(x_n)$ is bounded, say $x_n \in [0, B]^d$ for all $n\in\N$ with some
    $B$ with $\mu f \le B$.
    We are going to define a sequence of $d$-tuples $\boldsymbol{t}_\alpha \in [0, B]^d$ with
    $\boldsymbol{t}_\alpha \le \mu f$ by transfinite recursion on ordinals $\alpha$ such that the
    following properties hold:
    \begin{enumerate}
        \item[(a)] The point $\boldsymbol{t}_\alpha$ is a post-fixpoint of $f$ (but not a fixpoint).
        \item[(b)] For all $\alpha$ there is $N\in\N$ such that $x_n \ge \boldsymbol{t}_\alpha$ for all $n \ge N$.
        \item[(c)] For all ordinals $\alpha < \beta$ we have $\boldsymbol{t}_\alpha(1) + \dots + \boldsymbol{t}_\alpha(d) < \boldsymbol{t}_\beta(1) + \dots + \boldsymbol{t}_\beta(d)$.
    \end{enumerate}

    Let $\boldsymbol{t}_0 = (0, \dots, 0)$, then (a) and (b) are trivially satisfied.

    For the successor step, assume that $\boldsymbol{t}_\alpha$ has been defined and choose $N\in\N$
    according to (b), i.e., with $x_n\ge\boldsymbol{t}_\alpha$ for all $n\ge N$.
    Let $\boldsymbol{s}\coloneqq f(\boldsymbol{t}_\alpha)$, then by (a) we have
    $\boldsymbol{t}_\alpha \le \boldsymbol{s}$ and the inequality is strict in at least one component.
    Pick $j$ with $\delta\coloneqq \boldsymbol{s}(j) - \boldsymbol{t}_\lr(j) > 0$ and put
    $\delta_n\coloneqq \lr_n(j) \delta$.
    Note that, by assumption~\eqref{ass:quotient}, we have
    \[\lim_{n\to\infty}\frac{\df_n(j)}{\lr_n(j)}\left( 2(\boldsymbol{t}_\alpha(j) + B) \frac{1}{\delta} + \lr_n(j) \right) = 0\]
    In particular, we can choose $M\in\N$ such that for all $n\ge M$ we have
    \begin{equation}\label{eq:proof-exact-1}
        \frac{\df_n(j)}{\lr_n(j)}\left( 2(\boldsymbol{t}_\alpha(j) + B) \frac{\lr_n(j)}{\delta_n} + \lr_n(j) \right) \le \frac12
    \end{equation}
    Since $\lr_n(j)$ is non-zero infinitely often, we can choose $K \ge \max\{N, M\}$ with $\lr_K(j) \neq 0$ and define $\boldsymbol{t}_{\alpha+1}$ by
    \[\boldsymbol{t}_{\alpha+1}\coloneqq (\boldsymbol{t}_\alpha(1), \dots, \boldsymbol{t}_\alpha(j) + \nicefrac{1}{2} \cdot \delta_K, \dots, \boldsymbol{t}_\alpha(d))\]

    Now, (c) is clearly satisfied.
    Also $\boldsymbol{t}_{\alpha+1}$ is a post-fixpoint because $\boldsymbol{t}_\alpha \le \boldsymbol{t}_{\alpha+1} \le \boldsymbol{s} = f(\boldsymbol{t}_\alpha)$ and therefore, by monotonicity of $f$, we have
    \[f(\boldsymbol{t}_{\alpha+1}) \ge f(\boldsymbol{t}_\alpha) = \boldsymbol{s} \ge \boldsymbol{t}_{\alpha+1}\]
    Finally, we show that (b) is satisfied for all $n \ge K+1$ by induction on $n$.
    Note that this is clear for all components but the $j$-th one, by the inductive hypothesis on
    $\alpha$, so that we can focus on the $j$-th component.
    Since, by the inductive hypothesis, $x_K \ge \boldsymbol{t}_\alpha$ and thus also
    $f(x_K)\ge\boldsymbol{s}$ by monotonicity of $f$, using $\lr_K(j)\in[0,1]$ as well
    as~\eqref{eq:proof-exact-1}, we get
    \begin{align*}
        x_{K+1}(j) &\ge (1- \df_K(j)) \cdot \big( \boldsymbol{t}_\alpha(j) + \lr_K(j)(\boldsymbol{s}(j)-\boldsymbol{t}_\alpha(j)) \big)\\
        &= (1- \df_K(j)) \cdot \big( \boldsymbol{t}_\alpha(j) + \delta_K \big)\\
        &= \boldsymbol{t}_\alpha(j) + \delta_K - \df_K(j)(\boldsymbol{t}_\alpha(j) + \delta_K)\\
        &= \boldsymbol{t}_\alpha(j) + \delta_K - \delta_K\frac{\df_K(j)}{\lr_K(j)}\left( \boldsymbol{t}_\alpha(j)\frac{\lr_K(j)}{\delta_K} + \lr_K(j) \right)\\
        &\ge \boldsymbol{t}_\alpha(j) + \frac12 \delta_K
    \end{align*}
    Now, assume that $x_n \ge \boldsymbol{t}_{\alpha+1}$.
    If $\lr_n(j)=0$, then by assumption $\df_n(j)=0$ as well and, therefore,
    $x_{n+1}(j) = x_n(j) \ge \boldsymbol{t}_\alpha(j) + \delta_K/2$.
    If $\lr_n(j)>0$, we put $e\coloneqq \delta - \delta_K/2 \ge \delta/2 > 0$ and
    $e_n\coloneqq \lr_n(j) e$.
    Then
    \begin{align*}
        x_{n+1}(j) &\ge (1- \df_n(j)) \cdot \big((\boldsymbol{t}_\alpha(j) + \delta_K/2) + \lr_n(j) (\boldsymbol{s}(j)-(\boldsymbol{t}_\alpha(j) + \delta_K/2)) \big)\\
        &= (1- \df_n(j)) \cdot \big( (\boldsymbol{t}_\alpha(j) + \delta_K/2) + e_n \big)\\
        &= (\boldsymbol{t}_\alpha(j) + \delta_K/2) + e_n - \df_n(j)(\boldsymbol{t}_\alpha(j) + \delta_K/2 + e_n)\\
        &= (\boldsymbol{t}_\alpha(j) + \delta_K/2) + e_n - e_n\frac{\df_n(j)}{\lr_n(j)} \left( \boldsymbol{t}_\alpha(j)\frac{\lr_n(j)}{e_n} + \delta_K/2 \frac{\lr_n(j)}{e_n} + \lr_n(j) \right)\\
        &\ge (\boldsymbol{t}_\alpha(j) + \delta_K/2) + e_n - e_n\frac{\df_n(j)}{\lr_n(j)} \left( (\boldsymbol{t}_\alpha(j)+B)\frac{\lr_n(j)}{e_n} + \lr_n(j) \right)\\
        &\ge (\boldsymbol{t}_\alpha(j) + \delta_K/2) + e_n - e_n\frac{\df_n(j)}{\lr_n(j)} \left( 2(\boldsymbol{t}_\alpha(j)+B)\frac{\lr_n(j)}{\delta_n} + \lr_n(j) \right)\\
        &\ge (\boldsymbol{t}_\alpha(j) + \delta_K/2) + e_n - \frac12 e_n\\
        &\ge (\boldsymbol{t}_\alpha(j) + \delta_K/2)
    \end{align*}
    where we used that $\delta_K \le \delta \le \boldsymbol{s}(j) \le \mu f(j) \le B$ and $2e_n\geq\delta\geq\delta_n$.

    For the limit step, let $\lambda$ be a (countable) limit ordinal and assume that
    $\boldsymbol{t}_\alpha$ has been defined for all $\alpha < \lambda$.
    Define $\boldsymbol{s}\coloneqq \sup_{\lr<\lambda} \boldsymbol{t}_\lr$, then by the inductive
    assumption (a), $\boldsymbol{s}$ is a post-fixpoint of $f$ (as a supremum of post-fixed
    points).
    We distinguish two cases:
    \begin{itemize}
        \item If $\boldsymbol{s} = \mu f$, then we can stop the recursion and finish the proof of
            $\liminf x_n \ge \mu f$ as follows: Given $\varepsilon>0$ we can choose for each index
            $i=1,\dots,d$ an ordinal $\lr_i < \lambda$ such that
            $\boldsymbol{t}_{\lr_i}(i) \ge \mu f(i) - \varepsilon$.
            Choose $N_i$ for $\lr_i$ as in (b), then for all $n\ge\max\{N_1, \dots, N_d\}$ we
            have $x_n \ge \mu f - \varepsilon$.
        \item If $\boldsymbol{s}<\mu f$, put $\boldsymbol{u}\coloneqq f(\boldsymbol{s})$.
            Then, $\boldsymbol{s} \le \boldsymbol{u}$ and the inequality is strict in at least one
            component, say $\delta\coloneqq \boldsymbol{u}(j) - \boldsymbol{s}(j) > 0$.
    \begin{enumerate}
    \item[(i)] Putting $r \coloneqq \boldsymbol{s}(j) + \delta/2$, $e \coloneqq \delta/4$ and $e_n \coloneqq \lr_n(j) e$, choose $M\in\N$ large enough such that for all $n\ge M$ we have
    \[\frac{\df_n(j)}{\lr_n(j)} \left( r\frac{\lr_n(j)}{e_n} + \lr_n(j) \right) \le \frac12\]
    \item[(ii)] Choose for $i=1,\dots,d$ an ordinal $\lr_i < \lambda$ such that $\boldsymbol{t}_{\lr_i}(i) \ge \boldsymbol{s}(i) - \delta/(4d)$.
    \item[(iii)] Choose $N_i\in\N$ large enough, such that $x_n \ge \boldsymbol{t}_{\lr_i}$ for all $n\ge N_i$.
    \end{enumerate}
Put $K \coloneqq \max\{N_1,\dots,N_d, M\}$ and define a sequence $(y_n)_{n \ge K}$ by $y_K = (B,\dots,B)$ and
\[y_{n+1}(j) = (1- \df_n(j)) \cdot \big( (1-\lr_n(j)) y_n(j) + \lr_n(j) f(y_n)(j) \big)\]
Then trivially $y_K \ge x_K$ and since the sequences follow the same recursion rule we have $y_n \ge x_n$ for all $n\ge K$. In particular $y_n \ge \boldsymbol{t}_{\lr_i}$ for all $i = 1,\dots,d$ and all $n\ge K$. We claim that
\[y_n \ge (\max_{i\in\{1,\dots,d\}} \boldsymbol{t}_{\lr_i}(1), \dots, \boldsymbol{s}(j) + \delta/2, \dots, \max_{i\in\{1,\dots,d\}} \boldsymbol{t}_{\lr_i}(d))\]
for all $n\ge K$. For all but the $j$-th component this is clear. We can therefore focus on the $j$-th component, the case $n=K$ being trivial, because $\boldsymbol{s}(j) + \delta/2 \le \boldsymbol{u}(j) \le B$. Assume that the claim holds for some $n$, then by the non-expansiveness of $f$ we have
\begin{align*}
y_{n+1}(j) &= (1- \df_n(j)) \cdot \big( (1-\lr_n(j)) y_n(j) + \lr_n(j) f(y_n)(j) \big)\\
&\ge (1- \df_n(j)) \cdot \big( (1-\lr_n(j)) y_n(j) + \lr_n(j) (f(\boldsymbol{s})(j) - \delta/4) \big)\\
\end{align*}
because, by the choice of the $\lr_i$, the inductive assumption implies in particular that
\[y_n \ge \boldsymbol{s} - \delta/4\]
We may assume that $\lr_n \neq 0$, for otherwise $y_{n+1}(j) = y_n(j)$. This yields
\begin{align*}
y_{n+1}(j) &\ge (1- \df_n(j)) \cdot \big( (1-\lr_n(j)) r + \lr_n(j) (r+e) \big)\\
&= (1-\lr_n(j)) r + \lr_n(j) (r+e) - \df_n(j) (1-\lr_n(j)) r + \lr_n(j) (r+e)\\
&= r + e_n - \df_n(j)(r + e_n)\\
&= r + e_n - e_n\frac{\df_n(j)}{\lr_n(j)} \left( r\frac{\lr_n(j)}{e_n} + \lr_n(j) \right)\\
&\ge  r
\end{align*}
As before we get
\[\lim_{n\to\infty} \|x_{K+n} - y_{K+n}\| = 0\]
In particular there is $L \ge K$ such that $\|x_{n} - y_{n}\| \le \delta/12$ for all $n\ge L$, then also
\[x_n(j) \ge y_n(j) - \delta/12 \ge \boldsymbol{s}(j) + \nicefrac{5}{12} \cdot \delta\]
Now define
\[\boldsymbol{t}_\lambda = (\max_{i\in\{1,\dots,d\}} \boldsymbol{t}_{\lr_i}(1), \dots, \boldsymbol{s}(j) + \nicefrac{5}{12} \cdot \delta, \dots, \max_{i\in\{1,\dots,d\}} \boldsymbol{t}_{\lr_i}(d))\]
Then (b) and (c) are clearly satisfied for $\lambda$. It remains to show that $\boldsymbol{t}_\lambda$ is a post-fixpoint of $f$. We check for this each component separately. For $i = 1,\dots,d j$ note that by inductive hypothesis
\[f(\boldsymbol{t}_\lambda) \ge f(\boldsymbol{t}_{\lr_i}) \ge \boldsymbol{t}_{\lr_i}\]
and therefore
\[f(\boldsymbol{t}_\lambda) \ge \sup_{i\in\{1,\dots,d\}} \boldsymbol{t}_{\lr_i}\]
Also by non-expansiveness of $f$ we have
\[f(\boldsymbol{t}_\lambda)(j) \ge f(\boldsymbol{s})(j) - \delta/4 = \boldsymbol{u}(j) - \delta/4 = \boldsymbol{s}(j) + \nicefrac{3}{4} \cdot \delta \ge \boldsymbol{s}(j) + \nicefrac{5}{12} \cdot \delta = \boldsymbol{t}_\lambda(j)\]
\end{itemize}
This finishes the inductive construction of the sequence $(\boldsymbol{t}_\lr)$. Finally, note that the induction must stop at some countable limit ordinal by landing in the first case of the above case distinction (and thus $\liminf_{n\to\infty} x_n \ge \mu f$). Otherwise, $\boldsymbol{t}_\lr$ would be defined for all $\lr<\omega_1$ where $\omega_1$ is the first uncountable ordinal. But then (c) implies that the map $\lr \mapsto \boldsymbol{t}_\lr(1) + \dots + \boldsymbol{t}_\lr(d)$ is an order preserving embedding of $\omega_1$ into $\R$, a contradiction.

    Finally, we prove that $\liminf_{n\to\infty}x_n\geq\mu f$ for general $\mu f\in\Rpi^d$.
    Let $q<1$ be arbitrary and consider the function $f_q=q\cdot f$.
    As $f$ is non-expansive, $f_q$ is a $q$-contraction.
    In particular, $\mu f_q\in\Rp^d$.

    Writing $(x_n^{(q)})$ for the sequence generated by~\eqref{eq:iteration} with $f_q$, by the
    above we thus have $\lim_{n\to\infty}x_n^{(q)}=\mu f_q$.
    Furthermore, since $f_q\leq f$, we also have $\liminf_{n\to\infty}x_n\geq\mu f_q$.

    Since $q<1$ was arbitrary and $f=\sup_{q<1}f_q$, the Scott-continuity of the least fixpoint
    operator implies $\liminf_{n\to\infty}x_n\geq\sup_{q<1}\mu f_q=\mu f$, proving the statement.
    \qed
\end{proof}

Dampened Mann iteration admits chaotic iteration as a special case.

\begin{definition}[Chaotic iteration]
  \label{de:chaotic-mann-iteration}
  Let $\mathcal{S}=((\lr_n),(\df_n))$ be a generalized Mann scheme and
  let $(f_n)$ be a sequence of monotone and non-expansive maps over a $0$-box $X\subseteq\Rp^d$.  A
  \emph{chaotic iteration} on $\mathcal{S}$ based on a sequence of index
  sets $(I_n)$, $I_n\subseteq\{1,\dots,d\}$, is defined as below,
  with $x_0\in X$ arbitrary, and
  \begin{equation}
    \label{eq:chaotic-iteration}
    \begin{cases}
      x_{n+1}(i)=(1-\beta_n(i))(x_n(i)+\lr_n(i)(f_n(x_n)-x_n)(i))&\text{for }i\in I_n\\
      x_{n+1}(j)=x_n(j)&\text{for }j\notin I_n
    \end{cases}
  \end{equation}
\end{definition}

The results for generalized dampened Mann iterations immediately imply
analogous results for chaotic dampened Mann iterations.
Here, assuming $(\df_n)$ pointwise monotonically decreasing, the
somewhat technical condition in \Cref{thm:exact} translates to the
intuitive requirement that the time interval between two complete
sweeps should not become arbitrary large.

\begin{corollaryrep}[Chaotic iteration]
  \label{cor:chaotic-mann-iteration}
  Let $\mathcal{S}=((\lr_n),(\beta_n))$ be a
  generalized Mann scheme with $(\df_n)$ pointwise monotonically decreasing.
  Let $(f_n)$ be a sequence of monotone and non-expansive maps on a
  $0$-box $X\subseteq\Rp^d$, converging
  to  $f$. Let  $(I_n)$ be a sequence of set of
  indices and inductively define $(m_k)$ by $m_0 = 0$ and
  $m_{k+1}$ the least $m\in\N$ such that $\bigcup_{n=m_k}^{m-1} I_n = \{ 1, \ldots, d\}$.
  If
  \begin{equation}
    \label{eq:sum-chaotic}
    \sum_{k\in\N} \min_i \beta_{m_k}(i) = \infty
  \end{equation}
  then the sequences $(x_n)$ generated by chaotic iteration converge
  to $\mu f$.
\end{corollaryrep}

\begin{proof}
  Note that chaotic Mann iteration can be seen as a
  generalized Mann iteration as follows. Given the generalized Mann scheme
  $\mathcal{S}=((\lr_n),(\beta_n))$ and the
  index sets $(I_n)$, the chaotic iteration $(x_n)$ generated
  by~\eqref{eq:chaotic-iteration} is exactly the generalized dampened
  Mann iteration~\eqref{eq:iteration} with parameters
  $((\alpha_n'), (\beta_n'))$ defined as:
  \[
    \lr_n'(j)=\begin{cases}
      \lr_n(j) &\text{if }j\in I_n\\
      0&\text{if }j\notin I_n
    \end{cases}
  \]
  and
  \[
    \df_n'(j)=\begin{cases}
      \df_n(j)&\text{if }j\in I_n\\
      0&\text{if }j\notin I_n.
    \end{cases}
  \]

  Using the notation of Definition~\ref{de:generalized-Mann-scheme},
  by construction, recalling that the $\lr_n(j)$ are all positive, we
  have that the sequence $m_k$ satisfies the condition
  \[
    \{\ell\in\{m_k,\dots,m_{k+1}-1\}\mid \lr_\ell'(i)>0\}\neq\emptyset
  \]

  Moreover, since $\ell^i_k < m_{k+1}$ and $\df_n$ is pointwise monotonically decreasing, for all fixed $i$ and $k$,
  \[
    \min_j \df_{m_{k+1}}(j) \leq \df_{m_{k+1}}(i) \leq \df_{\ell_k^i}'(i)
  \]
  and thus
  \[
    \sum_k \min\{\df_{\ell_k^i}'(i) \mid i=1,\dots,d\} \geq \sum_k \min_i \df_{m_{k+1}}(i) =  \infty
  \]

  Thus we conclude by \cref{thm:exact}.
  \qed
\end{proof}

Given a (non-generalized) Mann scheme,
a simple way of applying the scheme in a chaotic fashion consists in
associating to each component a local copy of the parameters and then
updating at each step a single (independently) randomly chosen state,
letting the parameters progress to the next value only for this state. It can be proved that, in
this way, starting from a progressing scheme, the generated sequence converges almost surely to the fixpoint (see \cref{co:random} in the appendix).
This  will be used in the numerical experiments in \S\ref{se:numerical}.

\begin{toappendix}
  \paragraph{Chaotic random iteration.}
  A simple way of applying a (non-generalized) Mann scheme
  $\mathcal{S}=((\lr_n),(\df_n))$ in a chaotic fashion consists in
  associating to each component a local copy of the parameters,
  selecting at each step, in a random uniform way a component to be
  updated, and letting the parameter progress to the next value only
  for the component which has been updated.

  More precisely, one maintains local counters $c(1), \dots, c(d)$ to each component, initialized to $0$ and at each step $n \geq 0$:

\begin{enumerate}
    \item Select  $i \in [1,d]$ uniformly at random
    \item Define
      \begin{equation}
        \label{eq:random-iteration}
        \begin{cases}      
          x_{n+1}(i)=(1-\beta_{c(i)})(x_n(i)+\lr_{c(i)}(f_n(x_n)-x_n)(i))\\
          x_{n+1}(j)=x_n(j)&\text{for }j\neq i
        \end{cases}
      \end{equation}

    \item Increment $c(i)$
\end{enumerate}

Then, again as an easy corollary, one can show that using a progressing Mann scheme, with monotonically decreasing $\df_n$, the iteration converges almost surely.

\begin{corollaryrep}[Chaotic random iteration]
  \label{co:random}
  Let $\mathcal{S}=((\lr_n),(\beta_n))$ be a
  progressing Mann scheme and let $(f_n)$ be a sequence of monotone
  and non-expansive maps over a $0$-box $X\subseteq\Rp^d$, converging to a function $f$. Then the
  sequence $(x_n)$ generated by (\ref{eq:random-iteration}) above
  converges to $\mu f$ almost surely.
\end{corollaryrep}

\begin{proof}
  Observe that each iteration realized as in
  \eqref{eq:random-iteration} can be seen as a chaotic Mann iteration.
  In fact, let us denote by $i_n$ the component $i$ to be updated
  chosen at iteration $n$ and consider the generalized Mann scheme
  $((\lr_n'),(\df_n'))$ defined by
  \begin{center}
    $\lr_n'(i) = \lr_{s(n,i)}$ and  $\df_n'(i) = \df_{s(n,i)}$
  \end{center}
  where $s(n,i) = | \{ n' \mid n' < n\ \land\ i_{n'} = i\}$ (number of
  times component $i$ has been selected up to iteration $n$).  Fixing
  as index sets $I_n=\{i_n\}$, the iteration in
  \eqref{eq:random-iteration} is the corresponding chaotic iteration.

  Also observe that, since $s(n,i) \leq n$ and $\df_n$ is decreasing, the parameters
  $\df_n'(i)$ are lower bounded by $\df_n$.

  Define the sequence $(m_k)_{k \geq 0}$ as in
  Corollary~\ref{cor:chaotic-mann-iteration}, i.e.,
  \begin{itemize}
  \item $m_0 = 0$
  \item $m_{k+1} = \min \big\{ N > m_k\mid \forall j \in [1,d].\ \exists n \in [m_k, N).\ i_n=j\big\}$
  \end{itemize}
  namely, $m_{k+1}$ is the least index $h$ such that all $d$
  components have been selected at least once in the interval
  $(m_k,h]$.

  We show that with probability $1$, we have
  \[
    \sum_{k=0}^\infty \df_{m_k} = \infty.
  \]
  and thus, recalling that $\df_{m_k} \leq \min_i \df_{m_k}'(i)$, we
  deduce that $\sum_{k\in\N} \min_i \beta_{m_k}(i) = \infty$ and we
  conclude by Corollary~\ref{cor:chaotic-mann-iteration}.

  Let $L_k = m_{k+1} - m_k$ for $k \geq 0$. Then the random variables
  $(L_k)_{k \geq 0}$ are independent and identically distributed with
  finite expected value $\mathbb{E}[L_i] = \ell$. In
  fact, requiring that between $m_k+1$ and $m_{k+1}$ all components
  have been updated at least once, is an instance of a coupon
  collector problem, hence  $\ell = d H_d$ where
  $H_d = \sum_{j=1}^d \frac{1}{j}$ is the $d$-th harmonic number.

  Note that one can write $m_k$ as $\sum_{i=0}^{k-1} L_i$ and, by the
  strong law of large numbers, given that the $L_i$ are independent
  and identically distributed, we have that almost surely
  \[
    \lim_{k \to \infty}\frac{\sum_{i=0}^{k-1} L_i}{k} = \ell
  \]
  Putting things together, almost surely
  \[
    \lim_{k \to \infty} \frac{m_k}{k} = \ell
  \]
  and thus there exists $k_0$ such that for all $k \geq k_0$:
  \[
    m_k \leq 2 \ell  k.
  \]

  Since $\df_n$ is decreasing, for $k \geq k_0$:
  \[
    \df_{m_k} \geq \df_{2 \ell k}.
  \]
   
  We show that the the subseries $\sum_{k=0}^\infty \df_{2 \ell k}$,
  where we write $2 \ell k$ for the integer part
  $\lfloor 2 \ell k \rfloor$, is divergent and thus we conclude since 
   \[
     \sum_{k=0}^\infty \df_{m_k} \geq \sum_{k=0}^\infty \df_{2 \ell k} = \infty
  \]
  
  In order to show that $\sum_{k=0}^\infty \df_{2 \ell k} = \infty$
  observe that for all $k$, since each interval
  $[2 \ell k, 2 \ell (k+1))$ contains at most $2\ell+1$ indices and
  $\beta_n$ is monotonically decreasing
  \[
    \df_{2 \ell k} \geq \frac{1}{2 \ell+1} \sum_{n= 2 \ell k}^{2 \ell (k+1)- 1}  \df_n
  \]
  and therefore, for all $m$
  \[
    \sum_{k=0}^m \df_{2 \ell k} \geq \frac{1}{2 \ell+1} \sum_{k=0}^m \sum_{n= 2 \ell k}^{2 \ell (k+1)- 1}  \df_n = \frac{1}{2 \ell+1} \sum_{n= 0}^{2 \ell (m+1)- 1}  \df_n
  \]
  which, since $\sum_{n=0}^\infty \df_n = \infty$, implies, as desired.
   \[
     \sum_{k=0}^\infty \df_{2 \ell k} = \infty
   \]
   \qed
\end{proof}

\end{toappendix}

\subsection{Working with Approximations}
\label{sec:approx}

\Cref{thm:exact} assumes that the monotone and non-expansive function $f$ of interest is known
exactly.
The benefits of the (generalized) dampened Mann iteration, however, lie in the case where $f$ can only be approximated by a sequence $(f_n)$ of monotone and non-expansive approximations.

The main result of this section is the following result
that gives an essential sufficient criterion for convergence.

\begin{maintheoremrep}[Approximated Case] %
  \label{thm:monotone}
  Let
  $\mathcal{S}=((\lr_n),(\beta_n))$
  be a {\reg} Mann scheme, 
  $X\subseteq\Rp^d$ a $0$-box,
  and let $(f_n)$ be a sequence of monotone and non-expansive functions $f_n\colon X\to X$, pointwise
  converging to
  $f\colon X\to X$.
  Then, the sequence $(x_n)$ generated by~$\mathcal{S}$ fulfills
  \begin{enumerate}
  \item \label{thm:monotone:1}
    $\liminf_{n\to\infty}x_n\geq\mu f$
  \item \label{thm:monotone:2}
    if $\lim_{n\to\infty}\mu(\sup_{k\geq n}f_k)=\mu f$, then $\lim_{n\to\infty}x_n=\mu f$
  \end{enumerate}
\end{maintheoremrep}

\begin{proof}
  \begin{enumerate}
  \item Consider the sequence $(g_n)$ of functions
    $g_n=\inf_{k\geq n}f_k$.  Then, for every $n\in\N_0$, the function
    $g_n$ is, by construction, monotone and non-expansive and fulfills
    $g_n\leq g_{n+1}$.  Furthermore, since $f_n\to f$ pointwise, also
    $g_n\to f$ pointwise from below whence, by Scott-continuity of the
    least fixpoint operator, also $\sup_{n\in\N_0}\mu g_n=\mu f$.
    
    Now, for $n\in\N_0$, the sequence $(y_m^{(n)})$ generated by~\eqref{eq:iteration}
    using $y_0=x_n$ as well as the function $g_n$ converges, by~\Cref{thm:exact}, to
    $\mu g_n$.
    However, we also have $y_m^{(n)}\leq x_{m+n}$ due to $g_n\leq f_k$ for all $k\geq n$.
    Therefore,
    \[\liminf_{n\to\infty}x_n\geq\sup_{n\in\N_0}\lim_{m\to\infty}y_m^{(n)}=\sup_{n\in\N_0}\mu g_n=\mu f,\]
            proving the statement.
        \item For $n\in\N_0$, write $g_n=\sup_{k\geq n}f_k$.
            By construction, all $g_n$ are monotone and non-expansive, and, by assumption, fulfill
            $\mu g_n\to\mu f$.

            Analogously to the argumentation above, we thus get $\limsup_{n\to\infty}x_n\leq\mu f$.
            But then, using the above, we already have $\lim_{n\to\infty}x_n=\mu f$.
            \qed
    \end{enumerate}
\end{proof}

This result, in particular, implies convergence to the least fixpoint
if $(f_n)$ converges monotonically and $\mu f_n\to\mu f$, thus
generalizing~\cite[Theorem 4.6.1]{BGKPW:CAFDM}. Instead, in general,
the fact $\mu f_n\to\mu f$ is not sufficient to have convergence, as
shown by the example below, inspired by~\cite[Example
4.4]{BGKPW:CAFDM}.

\begin{example}
  Consider the function $f(x,y)=(y,x)$ and the sequence of
  approximations $(f_n)$ given by
  \[
    f_n(x,y)=\begin{cases}
      (y,x)&\text{if }n\text{ even}\\
      (y\ominus \nicefrac{2}{n},x\oplus \nicefrac{2}{n})&\text{if }n\text{ uneven}
    \end{cases}
  \]
  Despite $\mu f_n$ converging to $\mu f = (0,0)$, it can be seen that
  dampened Mann iteration does not converge in general.  Indeed, the
  hypotheses of \Cref{thm:monotone}(\ref{thm:monotone:2}) are
  violated, since the supremum $F_n=\sup_{k\geq n}f_k$ is always of
  the form
  $F_n(x,y)=(y,x+\varepsilon)$
  for some $\varepsilon>0$, thus having
  fixpoint $\mu F_n=(\infty,\infty)\nrightarrow (0,0)$.
\end{example}

By relying on \Cref{thm:monotone} one can also prove convergence in the approximated case when the limit
function $f$ is sufficiently well-behaved (without  assumptions on the mode of convergence).

\begin{theoremrep}[Convergence for Picard limit functions]
  \label{Theorem:PicardLimit}
  Let $(f_n)$ be a sequence of monotone and non-expansive functions
  $f_n\colon X\to X$ with $X$ a $0$-box, converging pointwise to a Picard operator $f\colon X\to X$.
  Then, we have
  \[\lim_{n\to\infty}\mu(\sup_{k\geq n}f_k)=\mu f.\]
  As a consequence, for any {\reg} Mann scheme $\mathcal{S}$, the sequence $(x_n)$ generated by
  $\mathcal{S}$ on $(f_n)$ converges to $\mu f$.
\end{theoremrep}

\begin{toappendix}
For the proof of the theorem above we require the following lemma.
For two points $p$ and $q$ of $\Rp^d$ we write $p \prec q$ if in every dimension
$i\in\{1,\dots,d\}$ we have a strict inequality $p(i) < q(i)$.
\begin{lemmarep}
Let $f\colon X \to X$ be a monotone and non-expansive map over a $0$-box $X\subseteq\Rp^d$.
If $q\in X$ is a point with $\mu f \le q$ such that $f^n(q) \prec q$ for some $n\in\N$ of the form
$n = 2^k$, then there is $p$ with $\mu f \le p \le q$, such that $f(p) \prec p$.
\end{lemmarep}
\begin{proof}
Fix a point $q$ with $\mu f \prec q$ and a power of two, say $n = 2^k$, with $f^n(q) \prec q$. We prove the conclusion of the lemma by induction on $k$ (simultaneously for all monotone and non-expansive maps).

If $k=0$ then the point $q$ is already as desired. Otherwise the monotone and non-expansive map $f^2$ satisfies $\mu (f^2) = \mu f$ and $(f^2)^{n/2}(q) \prec q$. By the inductive hypothesis there is a point $p$ with $\mu f \le p \le q$ such that $f^2(p) \prec p$. If $f(p) \prec p$ we are done. Otherwise the sets
\begin{eqnarray*}
  L & \coloneqq & \{i\in \{1,\dots,d\} \mid f(p)(i) < p(i)\} \\
  R & \coloneqq & \{i\in \{1,\dots,d\} \mid f(p)(i) \ge p(i)\}
\end{eqnarray*}
are both non-empty. In this case, we consider the point $r \coloneqq \inf\{p, f(p)\}$ (hence, by monotonicity, $f(r) \leq f(p)$ and $f(r) \leq f^2(p)$) and we claim that $f(r)(i) \le r(i)$ for all $i\in\{1,\dots,d\}$ and that the inequality is strict for all $i\in R$.
\begin{enumerate}
\item If $i\in R$, then $f(r)(i) \le \min\{f(p)(i), f^2(p)(i)\} = f^2(p)(i) < p(i) = r(i)$.
\item If $i\in L$, then $f(r)(i) \le \min\{f(p)(i), f^2(p)(i)\} \le f(p)(i) = r(i)$.
\end{enumerate}
Now define
\begin{align*}
\varepsilon &\coloneqq \min\{r(i) - f(r)(i) \mid i\in R\} > 0\\
\delta &\coloneqq \min\{p(i) - f(p)(i) \mid i\in L\} > 0
\end{align*}
Let $\gamma \coloneqq \min\{\varepsilon/3, \delta\}$, we claim that the point $s$ with
\[s(i) = \begin{cases}
r(i) & \text{if } i\in R\\
r(i) + \gamma & \text{if } i\in L
\end{cases}\]
is as desired. Note that $s \le p$ and $s \le r + \gamma$.
\begin{enumerate}
\item If $i\in R$, then using the non-expansiveness of $f$ we have
\[f(s)(i) \le f(r)(i) + \gamma < r(i) - \varepsilon + \gamma < r(i) = s(i)\]
\item If $i\in L$, then using the monotonicity of $f$ we have
  \[f(s)(i) \le f(p)(i) < f(p)(i) + \gamma = r(i) + \gamma = s(i)\]
  \qed
\end{enumerate}
\end{proof}

\begin{corollaryrep}\label{cor:picard-point}
Let $f\colon X \to X$ be a monotone and non-expansive Picard-operator over a $0$-box
$X\subseteq\Rp^d$.

Then, for all $\varepsilon>0$ with $\mu f+\varepsilon\in X$, there is a point $p$ with
$\mu f \le p \le \mu f + \varepsilon$ such that $f(p) \prec p$.
\end{corollaryrep}
\begin{proof}
Let $\varepsilon > 0$ and define $q \coloneqq \mu f + \varepsilon$.
Since $f$ is a Picard-operator the sequence $(f^n(q))$ converges to $\mu f$, in particular there is
$n\in\N$ of the form $n=2^k$, such that $f^n(q) \prec q$.
By the above lemma there is a point $p$ with $\mu f \le p \le q$ and $f(p) \prec p$.
\qed
\end{proof}

We are now ready to prove \Cref{Theorem:PicardLimit}.

\begin{proof}[Proof of \Cref{Theorem:PicardLimit}]
We write $g_n=\sup_{k>n}f_k$ for $n\in\N_0$ and let $\varepsilon>0$ be arbitrary.

By~\Cref{cor:picard-point}, there is a point $p$ with $\mu f \le p \le \mu f + \varepsilon$, such
that $f(p) \prec p$.
Define $\delta\coloneqq \min \{p(i) - f(p)(i) \mid i\in\{1,\dots,d\}\}$, then $\delta>0$ and there is
$N>\N$, such that $\|f_n(p)-f(p)\| \le \delta/2$ for all $n>N$.
In particular, we have $f_n(p) \le f(p) + \delta/2 \prec p$ for all $n>N$.
It follows that also $g_n(p) = \sup_{k>n}f_n(p) \le p$ for all $n>N$.
This means that $p$ is a pre-fixpoint of $g_n$, in particular $\mu g_n \le p$ for all $n>N$.
Since $\varepsilon>0$ was arbitrary, this shows $\lim_{n\to\infty}\mu g_n=\mu f$ and, thus, the
desired statement.

By~\Cref{thm:monotone}, it follows that any {\reg} generalized Mann scheme applied to $(f_n)$
yields an iteration converging to the fixpoint of $f$.
\qed
\end{proof}
\end{toappendix}

Note that this result implies convergence, in particular, for cases
where the limit function is contractive or power contractive, thus
generalizing both~\cite[Theorem 4.3 and Theorem B.3]{BGKPW:CAFDM-arxiv}.

\section{Applications to State-based Systems}
\label{se:ssg}

We treat the convergence of dampened Mann iterations for piecewise
linear functions such as those arising when dealing with optimal returns and optimal policies in zero-sum two-player
(turn-based) stochastic games.
These results generalize those in~\cite{BGKPW:CAFDM} which are limited
to MDPs, with proofs relying on techniques specifically designed for this setting,
like collapsing end-components, making it difficult to generalize them to
other function classes. Furthermore, we lift the requirement that the
least fixpoint, i.e., the expected return is finite for all
states. Finally, in~\cite{BGKPW:CAFDM}, we showed convergence only for
Mann-Kleene schemes, while here we consider generalized
{\bf} Mann schemes which allow also for asynchronous iteration.
\cut{All of
this will be done without the need to work with or even define notions
like the collapsing or (in the case of SSGs) deflating of end
components.}

\subsection{Simple Stochastic Games}

We will be working with the following definition of simple stochastic
games.

\begin{definition}[Simple stochastic game]
  A zero-sum two-player (turn-based) stochastic game $\mathcal{G}$
  (short: simple stochastic game or SSG) is a tuple
  $(S,A,T,R,S_{\max},S_{\min})$ where
    \begin{itemize}
    \item $S$ is a (finite) set of states
    \item $(A(s))_{s\in S}$ is a family of (finite) sets of actions
      $A(s)$ enabled in state $s$, and, by abuse of notation, we write
      $A\coloneqq\bigcup_{s\in S}A(s)$;
    \item
      $T\colon S\times A\times S\to[0,1],(s,a,s')\mapsto T_a(s,s')$ is
      a transition function fulfilling;
      \[T_a(s)\coloneqq\sum_{s'\in S}T_a(s,s')\in[0,1]\quad\text{ for
          all }s\in S,a\in A\]
    \item $R\colon S\times A\to\mathbb{R}_+,(s,a)\mapsto R_a(s)$ is a
      reward function;
    \item $S_{\max}\cup S_{\min}=S$ are a partition of $S$ where
      $S_{\max}$ and $S_{\min}$ are states controlled by the $\max$-
      and $\min$-player, respectively.
    \end{itemize}
    We write $F=F(\mathcal{G})\coloneqq\{s\in S\mid A(s)=\emptyset\}$
    for the set of final states.
\end{definition}
Intuitively, an SSG is a probabilistic state-based system where, in a
given non-final state $s\in S\setminus F$, one of two external agents
(the $\max$-player if $s\in S_{\max}$ or $\min$-player if
$s\in S_{\min}$) chooses one of the actions $a\in A(s)$ enabled in
said state after which the system emits a reward $R_a(s)$ and
transitions to a new state $s'\in S$ with probability $T_a(s,s')$ (or
terminates with probability $1-T_a(s)$).  The goal of the two players
is to choose actions in order to maximize ($\max$-player) or minimize
($\min$-player) the total expected reward.

As we allow $T_a(s)<1$, this definition also encompasses systems with
a discount factor.  Furthermore, it comprises as special cases also
Markov chains (where $|A(s)|\leq 1$ for all $s\in S$), (maximizing)
Markov decision processes (MDPs; where $|A(s)|\leq 1$ for all
$s\in S_{\min}$), and minimizing Markov decision processes (Min-MDPs;
where $|A(s)|\leq 1$ for all $s\in S_{\max}$).

Policies, as defined below, resolve the non-determinism present in SSGs by fixing the agents' behaviour.
\begin{definition}[Policy]
  Let $(S,A,T,R,S_{\max},S_{\min})$ be a SSG.
  A (memoryless and deterministic) policy on a set
  $S'\subseteq S\setminus F$ is a map $\pi\colon S'\to A$ with
  $\pi(s)\in A(s)$ for $s\in S'$.
  Given an SSG $\mathcal{G}=(S,A,T,R,S_{\max},S_{\min})$ and a policy
  $\pi\colon S'\to A$, we write
  $\mathcal{G}^\pi=(S,A^\pi,T,R,S_{\max},S_{\min})$ for the SSG where,
  for $s\in S$
  \[
    A^\pi(s)=
    \begin{cases}
      \{\pi(s)\}&\text{if }s\in S'\\
      A(s)&\text{otherwise}
    \end{cases}
  \]
\end{definition}
Note that, when $\pi_{\min}\colon S_{\min}\setminus F\to A$ and $\pi_{\max}\colon S_{\max}\setminus F\to A$ are
policies of the $\min$- and $\max$-player, respectively, we have that $\mathcal{G}^{\pi_{\min}}$ is
a (Max-)MDP, $\mathcal{G}^{\pi_{\max}}$ is a Min-MDP, and $\mathcal{G}^{(\pi_{\min},\pi_{\max})}$ is a
Markov chain (where $(\pi_{\min},\pi_{\max})$ denotes the combined policy for both players).

The expected total reward for each starting state $s$ in
$\mathcal{G}$, assuming that both players act optimally, can be
characterized as the least fixpoint (in $\Rpi^S$) of the so-called
Bellman operator
$f_{\mathcal{G}}\colon \mathbb{R}_+^S\to\mathbb{R}_+^S$ given
by
\[f_{\mathcal{G}}(v)(s)=\begin{cases}
    \max_{a\in A(s)}(R_a(s)+\sum_{s'\in S}T_a(s,s')v(s'))&\text{if }s\in S_{\max}\\
    \min_{a\in A(s)}(R_a(s)+\sum_{s'\in S}T_a(s,s')v(s'))&\text{if }s\in S_{\min}
  \end{cases}\]
Note that the expected total reward might be infinite.
\begin{toappendix}
  \begin{remark}
    \label{rem:gen-mdp}
    
  It should also be noted that
  the definitions of MDPs and Bellman operators in~\cite{BGKPW:CAFDM}
  and these do not exactly match. In~\cite{BGKPW:CAFDM}, we used rewards for
  transitions $(s,a,s')$ while we use rewards for actions
  $(s,a)$. This does, however, not result in a loss of theoretical
  expressiveness \emph{(every Bellman operator in~\cite{BGKPW:CAFDM}
    is a Bellman operator in this sense)}.  Here, we, on the other hand,
  allow arbitrary sub-probability distributions over next states
  ($T_a(s)\leq 1$ interpreted as the system terminating immediately
  with probability $1-T_a(s)$) which cannot be (directly) expressed by
  the definition in~\cite{BGKPW:CAFDM} \emph{(not every Bellman
    operator in this sense is a Bellman operator
    in~\cite{BGKPW:CAFDM})}.
\end{remark}
\end{toappendix}

For finite-state SSGs, it is known that memoryless policies such as
the ones described above are sufficient to act optimally (see,
e.g.,~\cite{c:complexity-stochastic-game}) whence we have
\begin{equation}\label{eq:value-equiv}
    \mu f_{\mathcal{G}}=\mu f_{\mathcal{G}^{\pi_{\min}^*}}=\mu f_{\mathcal{G}^{\pi_{\max}^*}}=\mu f_{\mathcal{G}^{\pi_{\max}^*,\pi_{\min}^*}}
\end{equation}
for some (optimal) policies $\pi_{\min}\colon S_{\min}\setminus F\to A$ and
$\pi_{\max}\colon S_{\max}\setminus F\to A$.

Before we give convergence proofs, we end this subsection by giving a definition of a sampled SSG:

\begin{definition}[Sampling]
  \label{def:sampling}
  Given an SSG $\mathcal{G}=(S,A,T,R,S_{\max},S_{\min})$, a sequence
  $(\mathcal{G}_n)$ of SSGs
  $\mathcal{G}_n=(S,A,T_n,R_n,S_{\max},S_{\min})$ is a \emph{sampling}
  of $\mathcal{G}$ if
    \begin{itemize}
        \item $T_n\rightarrow T$ and $R_n\to R$ for $n\to\infty$.

        \item For $s,s'\in S,a\in A$, if $T_a(s,s')=0$, then
          $(T_n)_a(s,s')=0$ for all $n\in\mathbb{N}$, i.e., a
          non-existing transition can never be sampled.

        \item For $s\in S,a\in A$, if $R_a(s)=0$, then $(R_n)_a(s)=0$
          for all $n\in\mathbb{N}$, i.e., a non-existing reward can
          never be sampled.

        \item For $s\in S,a\in A$, if $T_a(s)=1$, then $(T_n)_a(s)=1$
          for all $n\in\mathbb{N}$, i.e., a non-terminating action can
          never be sampled as terminating.
    \end{itemize}
\end{definition}

\subsection{Convergence for Simple Stochastic Games}

The proof of convergence of the dampened Mann iteration for SSGs will
mainly be based on the technical convergence criterion given
by~\Cref{thm:monotone}.

To be precise, we will show that the sequences of Bellman operators $(f_n)$
of sampled SSGs fulfill the condition
\begin{equation}
  \label{eq:conv-condition}
    \lim_{n\to\infty}\mu(\sup_{k\geq n}f_k)=\mu(\lim_{n\to\infty}f_n).
\end{equation}

This will be done by first reducing the problem to MDPs which, in turn,
depends on results for Markov chains, using~\eqref{eq:value-equiv}.
In fact, we use the following result for the optimal values of Markov
chains.

\begin{toappendix}
  We start by proving \enquote*{sampling-continuity} for Markov
  chains.  We fix a Markov chain $M=(S,T,R)$ where
  $T\colon S\times S\to [0,1]$ and $R\colon S\to \mathbb{R}_+$.  Note
  that, for Markov chains, the Bellman operator is of the form
    \[f(v)(s)=\begin{cases}
        0&\text{if }s\in F\\
        R_*(s)+\sum_{s'\in S}T_*(s,s')v(s')&\text{if }s\notin F\\
    \end{cases}\]
    where $*$ is the unique action in $A(s)$.
    As such, we can write $f$ as an affine linear function
    \[f(v)=Tv+R\]
    for some sub-probability matrix $T\in[0,1]^{S\times S}$ and vector $R\in\mathbb{R}_+^S$, and fully
    describe the Markov process by the tuple $(S,T,R)$.

    We recall the following basic concepts from Markov chain theory.
    \begin{definition}
        Given a Markov chain $M=(S,T,R)$, we define
        \begin{itemize}
        \item the relation $\to$ on states $S$ as $s\to s'$ iff
          $T(s,s')>0$;
        \item a state $s\in S$ to be essential iff for all $s'\in S$ with
          $s\to^* s'$, we have $T(s')=1$ and $s'\to^* s$, where $\to^*$ is
          the transitive and reflexive closure of $\to$
        \end{itemize}
    \end{definition}

    For the sake of readability, we will write inductively for $s,s'\in S$:
    \[\mathbb{P}(s\to^n s')=\begin{cases}
        1&\text{if }n=0,s=s'\\
        0&\text{if }n=0,s\neq s'\\
        \sum_{s''\in S}T(s,s'')\mathbb{P}(s''\to^{n-1}s')&\text{if }n>0
      \end{cases}\] where
    $f^n(v)(s)=\sum_{s'\in S}\left[\mathbb{P}(s\to^n
      s')v(s')+\sum_{k=0}^{n-1}\mathbb{P}(s\to^k s')R(s')\right]$ and
    $\mu f(s)=\sum_{k=0}^\infty\sum_{s'\in S}\mathbb{P}(s\to^k
    s')R(s')$ which corresponds to the traditional definition of the
    value.

    It is a well-known property that for essential states $s\in S$
    \[\sum_{n=0}^\infty\mathbb{P}(s\to^n s)=\infty,\]
    i.e., if the Markov process is in an essential state, it is
    expected to visit said state infinitely often (see,
    e.g.,~\cite{mc}).  On the other hand, for all $s\in S$, there
    either exists a state $s'\in S$ with $s\to^* s'$ that is either
    essential or fulfills $T(s')<1$.

    We now show that the value of some states can be purely determined by the abstract
    \enquote*{structure} of the MDP.
    Namely, for a Markov chain $M=(S,T,R)$, we define the sets
    \begin{itemize}
        \item $S_0\coloneqq\{s\in S\mid\forall s'\in S : s\to^* s'\Rightarrow R(s')=0\}$,
        \item $S_\infty\coloneqq\{s\in S\mid\exists s'\in S\setminus
          S_0 : s\to^* s'\land s'\text{ essential}\}$,
        \item $S_?\coloneqq S\setminus(S_0\cup S_\infty)$.
    \end{itemize}
    Note that any two Markov chains $M=(S,T,R)$ and $M'=(S,T',R')$ on the same state set fulfilling
    for all $s,s'\in S$
    \begin{itemize}
        \item $T(s,s')=0$ iff $T'(s,s')=0$
        \item $T(s)=1$ iff $T'(s)=1$
        \item $R(s)=0$ iff $R'(s)=0$
    \end{itemize}
    have the same sets $S_0,S_\infty,$ and $S_?$.

    First, we show the following lemma, showing that the Bellman operator of Markov chains is
    \enquote*{almost} a power contraction.
    \begin{lemmarep}\label{lem:mp-power-cont}
        Given a Markov chain $M=(S,T,R)$, the function
        \[\tilde{f}_M\colon \mathbb{R}_+^S\to\mathbb{R}_+^S,\tilde{f}_M(v)(s)=\begin{cases}
            0&\text{if }s\in S_0\cup S_\infty\\
            f_M(v)(s)&\text{if }s\in S_?
        \end{cases}\]
        is a power contraction.
    \end{lemmarep}
    \begin{proof}
        Write $f=\tilde{f}_M$ for convenience.
        Let $v,v'\in\mathbb{R}_+^S$ be arbitrary and write
        $p=\min(\min\{T(s,s')\mid s,s'\in S,s\to s'\},1-\max\{T(s)\mid s\in S,T(s)<1\})$.
        We start by noting that, by definition, $S_?$ contains no essential states.
        As such, we have for $s\in S_?$ that there exists $n\in\mathbb{N}_0$ and $s'\in S$ with
        $s\to^n s'$ and either $s'\in S_0\cup S_\infty$ or $T(s')<1$.

        Assume that $n$ is minimal with regards to that property (obviously $n<|S|$).
        Then, we have
        \begin{eqnarray*}
            &&\|f^{n+1}(v)(s)-f^{n+1}(v')(s)\| \\
            &\leq&\sum_{t\in S}\mathbb{P}(s\to^n t)\|f(v)(t)-f(v')(t)\|\\
            &\leq&(1-\mathbb{P}(s\to^n s'))\|v-v'\|+\mathbb{P}(s\to^n s')\|f(v)(t)-f(v')(t)\|
        \end{eqnarray*}
        But, if $s'\in S_0\cup S_\infty$, we have $\|f(v(s'))-f(v')(s')\|=0$ whence
        \[\|f^n(v)(s)-f^n(v')(s)\|\leq (1-\mathbb{P}(s\to^n s'))\|v-v'\|\leq (1-p^n)\|v-v'\|.\]
        If, on the other hand, $T(s')<1$, we have $\|f(v(s'))-f(v')(s')\|\leq T(s')\|v-v'\|$ whence
        \[\|f^n(v)(s)-f^n(v')(s)\|\leq (1-\mathbb{P}(s\to^n s')(1-T(s')))\|v-v'\|\leq (1-p^{n+1})\|v-v'\|.\]
        All in all, as $s\in S_?$ was arbitrary and $\|f(v)(s)-f(v')(s)\|=0$ for any $s\in S_0\cup
        S_\infty$, we have
        \[\|f^{|S|}(v)-f^{|S|}(v')\|\leq(1-p^{|S|})\|v-v'\|,\]
        showing that $f$ is a power-contraction.
        \qed
    \end{proof}

    Now, we can show that the sets above indeed accurately describe the value:
    \begin{lemma}\label{lem:mp-value-struct}
        Given a Markov chain $M=(S,T,R)$, we have for $s\in S$ that
        \begin{enumerate}
            \item $\mu f_M(s)=0$ if and only if $s\in S_0$
            \item $\mu f_M(s)=\infty$ if and only if $s\in S_\infty$
        \end{enumerate}
    \end{lemma}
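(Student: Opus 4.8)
The plan is to argue directly from the closed form $\mu f_M(s)=\sum_{k=0}^\infty\sum_{s'\in S}\mathbb{P}(s\to^k s')R(s')$, using the two facts recalled above: that $\sum_{n=0}^\infty\mathbb{P}(s\to^n s)=\infty$ for essential $s$, and that from any state one reaches an essential state or a state with $T(\cdot)<1$.

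Part~1 is immediate. If $s\in S_0$, then $\mathbb{P}(s\to^k s')>0$ forces $s\to^* s'$ and hence $R(s')=0$, so every summand vanishes and $\mu f_M(s)=0$; conversely, if $s\notin S_0$ there is $s'$ with $s\to^* s'$ and $R(s')>0$, and picking $k$ with $\mathbb{P}(s\to^k s')>0$ gives $\mu f_M(s)\ge\mathbb{P}(s\to^k s')R(s')>0$. For the direction $s\in S_\infty\Rightarrow\mu f_M(s)=\infty$ of part~2, let $t\in S\setminus S_0$ be essential with $s\to^* t$; since $t\notin S_0$ there is $t'$ with $t\to^* t'$ and $R(t')>0$, and since $t$ is essential and $t\to^* t'$, also $t'$ is essential, so $\sum_{j}\mathbb{P}(t'\to^j t')=\infty$. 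Bounding the series below by the terms with $s'=t'$ and using $\mathbb{P}(s\to^{m+j}t')\ge\mathbb{P}(s\to^m t')\,\mathbb{P}(t'\to^j t')$ for an $m$ with $\mathbb{P}(s\to^m t')>0$ yields $\mu f_M(s)=\infty$.

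The remaining direction — $\mu f_M(s)<\infty$ whenever $s\notin S_\infty$ — is where the real work sits; by part~1 we may assume $s\in S_?$. First I would record two structural facts obtained by unwinding the definitions: (i) $S_\infty$ is unreachable from $S_?$ (otherwise $s$ itself would lie in $S_\infty$), while $S_0$ is closed under $\to$ and reward-free, so a trajectory started in $S_?$ stays in $S_?$ until it possibly enters $S_0$ or terminates, after which it contributes no further reward; (ii) $S_?$ contains no essential state (as already noted in the proof of \Cref{lem:mp-power-cont}), so, exactly as in that proof, from every state of $S_?$ a state of $S_0$ or a state with $T(\cdot)<1$ is reached within $|S|$ steps along a path whose probability is bounded below by a positive constant depending only on $M$; consequently there is $q>0$ such that from any state of $S_?$, within $|S|$ steps, the trajectory has terminated or entered $S_0$ with probability at least $q$. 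Writing $m_k(s)=\sum_{s'\in S_?}\mathbb{P}(s\to^k s')$ for the mass still in $S_?$ at time $k$, fact~(i) gives $\sum_{s'\in S}\mathbb{P}(s\to^k s')R(s')\le R_{\max}\,m_k(s)$ with $R_{\max}=\max_{t\in S}R(t)$, and fact~(ii) gives that $(m_k(s))_k$ is non-increasing with $m_{i|S|}(s)\le(1-q)^i$; summing over $k$ in blocks of length $|S|$ yields $\mu f_M(s)\le R_{\max}\sum_{i\ge 0}|S|(1-q)^i<\infty$, so $\mu f_M(s)\ne\infty$.

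I expect the only genuine obstacle to be fact~(ii), the uniform lower bound on the escape probability out of $S_?$; but this is precisely the combinatorial estimate already carried out in the proof of \Cref{lem:mp-power-cont}, so I would reuse it, the one small addition being the observation (also used for fact~(i)) that trajectories out of $S_?$ never reach $S_\infty$.
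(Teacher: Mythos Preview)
Your argument is correct. Parts~1 and the forward direction of part~2 match the paper's proof essentially verbatim (the paper also observes that $S_\infty$ can equivalently be described as the states reaching an essential state with positive reward, which is your detour through $t'$).

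The genuine difference is in the final direction, $s\notin S_\infty\Rightarrow\mu f_M(s)<\infty$. The paper does not carry out a probabilistic estimate; instead it observes that $S\setminus S_\infty$ is closed under $\to$, so for $s\in S_?\cup S_0$ the Kleene iterates of $f_M$ and of the modified map $\tilde f_M$ from \Cref{lem:mp-power-cont} agree in component $s$, whence $\mu f_M(s)=\mu\tilde f_M(s)$; since $\tilde f_M$ is a power contraction it has a finite fixpoint, and finiteness follows. Your route is more concrete: you bound the reward series directly by tracking the mass $m_k(s)$ remaining in $S_?$ and showing it decays geometrically, reusing only the combinatorial escape bound that underlies \Cref{lem:mp-power-cont} rather than its contraction conclusion. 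Both arguments rest on the same structural facts (closure of $S_0$ and $S\setminus S_\infty$, absence of essential states in $S_?$), and your estimate is essentially what makes $\tilde f_M$ a power contraction in the first place; the paper's version is shorter because it packages that estimate into the prior lemma and then appeals to Banach, while yours is self-contained and gives an explicit bound $R_{\max}\,|S|/q$ on the value.
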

    \begin{proof}
        \begin{enumerate}
            \item Note that, for $s\in S_0$, we have $R(s)=0$ and $s'\in S_0$ for all $s'\in S$ with
                $s\to s'$.
                In particular, if $v\in[0,\infty]^S$ with $v(s)=0$ for all $s\in S_0$, we also have
                \[f_M(v)(s)=R(s)+\sum_{s'\in S}T(s,s')v(s')=\sum_{s'\in S_0}T(s,s')v(s')=0\]
                for all $s\in S_0$, proving $\mu f_M(s)=0$ for all $s\in S_0$.

                On the other hand, for $s\in S\setminus S_0$, there exists $n\in\mathbb{N}_0$ and
                $s'\in S$ with $s\to^n s'$ and $R(s')>0$.
                But then
                \[\mu f(s)\geq f^{n+1}(0)(s)=\sum_{t\in S}\sum_{k=0}^{n-1}\mathbb{P}(s\to^k t)R(t)\geq\mathbb{P}(s\to^n s')R(s')>0.\]
            \item Notice that by the above $S_\infty=\{s\in S\mid\exists s'\in S\colon s\to^* s'\land r(s')>0\land s'\text{ essential}\}$.

                Thus, let $s\in S_\infty$ and $s'\in S$ essential with $r(s')>0$ and $s\to^n s'$ for
                some $n\in\mathbb{N}_0$.
                But then, we immediately get
                \[\mu f_M(s)\geq\sum_{k=0}^\infty\mathbb{P}(s\to^k s')R(s')\geq\mathbb{P}(s\to^n s')\sum_{k=0}^\infty\mathbb{P}(s'\to^k s')R(s')=\infty.\]

                It remains to show that $\mu f_M(s)=\infty$ only if $s\in S_\infty$.
                For this, we notice that $S\setminus S_\infty$ forms a closed set in the sense
                \[\forall s\in S\setminus S_\infty,s'\in S\colon s\to^*s'\Rightarrow s'\notin S_\infty.\]
                Thus, we have for all $s\in S_?\cup S_0$ that $(f_M^{(0)})^n(0)(s)=f_M^n(0)(s)$ (with
                $f_M^{(0)}$ as in~\Cref{lem:mp-power-cont} above).
                But then, we have that $\mu f_M(s)=\mu f_M^{(0)}(s)$ for all $s\in S_?\cup S_0$.
                Since $f_M^{(0)}$ is a power-contraction by~\Cref{lem:mp-power-cont}, it has a (unique) fixpoint
                in $\mathbb{R}_+^S$, whence $\mu f_M(s)=\mu f_M^{(0)}(s)<\infty$ for
                $s\in S\setminus S_\infty$.
                \qed
        \end{enumerate}
    \end{proof}

    This shows that for a Markov process $M$, the fact that the value of $\mu f_M$ on some state is $0$ or $\infty$ is only
    determined by the underlying graph structure and by the support of its
    reward function. Therefore, for any sampling $(M_n)$ of a Markov chain $M$, if
    $\mu f(s) = 0/\infty$ then eventually $\mu f_n(s) = 0/\infty$.

    We can use this to show convergence of least fixpoints for sampled Markov chains.
\end{toappendix}

\begin{lemmarep}
  \label{lem:mp-sampled-lfp-convergence}
  Given a sampling $\mathcal{M}=(M_n)$ of a Markov chain $M$,
  we have
    $\mu f_M=\lim_{n\to\infty}\mu f_{M_n}.$
\end{lemmarep}
\begin{proof}
  For the sake of readability, we write $f_n=f_{M_n}$ and $f=f_M$.
  Note that by definition of sampling, we have for $n$ large enough
  $T_n(s,s')>0$ iff $T(s,s')>0$, $T_n(s)<1$ iff $T(s)<1$, and
  $R_n(s)>0$ iff $R(s)>0$ whence also $(S_n)_0=S_0$ and
  $(S_n)_\infty=S_\infty$.  W.l.o.g., we assume that this holds true
  for all $n\in\mathbb{N}$.
  
  Lemma~\ref{lem:mp-value-struct} thus gives us
  \[\mu f_n(s)=0\Leftrightarrow\mu f(s)=0\quad\text{ and }\quad\mu f_n(s)=\infty\Leftrightarrow\mu f(s)=\infty.\]
  It remains to be proven that $\mu f_n(s)\to\mu f(s)$ for $s\in S_?$.
  Now, as above we make use of the fact that, for
  $s\in S\setminus S_\infty$ and $s'\in S_\infty$, we have
  $\mathbb{P}(s\to^n s')=0$ for all $n\in\mathbb{N}_0$ (also for the
  approximated chains $M_n$) whence
  \[\mu f(s)=\mu f^{(0)}(s)\text{ and }\mu f_n(s)=\mu f_n^{(0)}(s)\]
  for all $s\in S_?$.

  But, since $f_n^{(0)}\to f^{(0)}$ and all $f_n^{(0)}$ and $f^{(0)}$
  are power-contractions, we get the desired result
  \[\mu f(s)=\mu f^{(0)}(s)=\lim_{n\to\infty}\mu f_n^{(0)}(s)=\lim_{n\to\infty}\mu f_n(s).\]
  \qed
\end{proof}

\begin{toappendix}
    This now directly gives sampling-continuity for SSGs.
\end{toappendix}

We proved such a result already in~\cite{BGKPW:CAFDM},
yet some extra
work has to be done to adapt to the new setting, allowing for approximated rewards and
infinite values.
The idea is that states with zero or infinite value are determined by
the underlying structure of the Markov chain (in terms of
Definition~\ref{def:sampling}), whereas  on the \enquote*{non-trivial}
states, the Bellman operator enjoys a power contraction property
ensuring convergence to the (unique) fixpoint.

Using this result for Markov chains, it is easy to generalize to SSGs
using~\eqref{eq:value-equiv}.

\begin{corollaryrep}[Sampling-continuity of least-fixpoint operator]\label{cor:lfp-samp-cont}
    Given a sampling $\mathcal{G}=(G_n)$ of an SSG $G$,
    we have
    $\mu f_G=\lim_{n\to\infty}\mu f_{G_n}$.
\end{corollaryrep}
\begin{proof}
  This follows directly from~\Cref{lem:mp-sampled-lfp-convergence} and
  optimality of memoryless policies (see~\eqref{eq:value-equiv}):
    \begin{align*}
        \mu f_G&=\min_{\pi_{\min}}\max_{\pi_{\max}}\mu f^{(\pi_{\min},\pi_{\max})}\\
        &=\min_{\pi_{\min}}\max_{\pi_{\max}}\lim_{n\to\infty}\mu f_n^{(\pi_{\min},\pi_{\max})}\\
        &=\lim_{n\to\infty}\min_{\pi_{\min}}\max_{\pi_{\max}}\mu f_n^{(\pi_{\min},\pi_{\max})}\\
        &=\lim_{n\to\infty}\mu f_n
    \end{align*}
    using that only finitely many (memoryless) policies $\pi_{\min}\colon S_{\min}\setminus F\to A$ and
    $\pi_{\max}\colon S_{\max}\setminus F\to A$ exist.
    \qed
\end{proof}

We continue by showing Condition~\eqref{eq:conv-condition} for
MDPs and derive the following.
\begin{theoremrep}
  \label{th:mdp}
    Given a sampling $\mathcal{M}=(M_n)$ of an MDP $M$, we have\linebreak
    $\lim_{n\to\infty}\mu(\sup_{k\geq n}f_{M_k})=\mu f_M$.
    As a consequence, for any {\reg} Mann scheme $\mathcal{S}$, the sequence $(x_n)$
    generated by $\mathcal{S}$ on $(f_{M_n})$ converges to $\mu f_M$.
\end{theoremrep}
\begin{proofsketch}
  The main idea used to prove this statement is that the supremum of
  Bellman operators of arbitrary many MDPs of the same underlying
  structure (in the sense of Definition~\ref{def:sampling}) is itself a Bellman
  operator of an MDP with the \enquote*{same} structure (although with
  potentially many more actions) as long as all transition
  probabilities and rewards are close enough.  Using this, 
  Condition~\eqref{eq:conv-condition} can be shown to break down to
  continuity of the least fixpoint operator for a \enquote*{sampling}
  sequence.
  \qed
\end{proofsketch}
\begin{proof}
  Let $\tilde{f}_n=\sup_{k \geq n}f_n$ and let us show that $\mu\tilde{f}_n\to\mu f$.

    We write $T_{\min}=\min\{T_a(s,s')\mid s,s'\in S,a\in A(s),T_a(s,s')>0\}$ and
    $T_{\max}=\max\{T_a(s,s')\mid s,s'\in S,a\in A(s),T_a(s,s')<1\}$ for the smallest
    and largest non-trivial transition probability in $M$, and let
    $\varepsilon\in(0,\min(T_{\min},1-T_{\max}))$.

    We know that, for some $N\in\mathbb{N}$, we have $\|T_n-T\|<\varepsilon/|S|$ and
    $\|R_n-R\|<\varepsilon$ for all $n>N$.

    We now construct a new MDP $M_\varepsilon=(S,A_\varepsilon,T_\varepsilon,R_\varepsilon)$ as
    follows:
    For $s\in S$, we define
    \begin{eqnarray*}
      A_\varepsilon(s) & =& \{a_{s'}\mid a\in
      A(s),|\supp{T_a(s,\cdot)}|>1,s'\in\supp{T_a(s,\cdot)}\cup\mbox{}\\
      && \qquad \{a\mid
        a\in A(s),|\supp{T_a(s,\cdot)}|\leq 1\}
    \end{eqnarray*}
    For $a\in A(s)$ with $|\supp{T_a(s,\cdot)}|\leq 1$, we define
    \begin{eqnarray*}
      (T_\varepsilon)_a(s,s') &=& \begin{cases}
        \min(1,T_a(s,s')+\varepsilon)&\text{if }T_a(s,s')>0\\
        0&\text{otherwise}
      \end{cases} \\
      (R_\varepsilon)_a(s) &=& \begin{cases}
        R_a(s)&\text{if }R_a(s)=0\\
        R_a(s)+\varepsilon&\text{otherwise.}
      \end{cases}
  \end{eqnarray*}
    For $a\in A(s)$ with $|\supp{T_a(s,\cdot)}|>1$ and $t\in\supp{T_a(s,\cdot)}$, on the other hand,
    we define
    \[(T_\varepsilon)_{a_{t}}(s,s')=T_a(s,s')+\begin{cases}
        \varepsilon+(1-T_a(s))&\text{if }s'=t,\\
        -\nicefrac{\varepsilon}{|\supp{T_a(s,\cdot)}|-1}&\text{if }s'\in\supp{T_a(s,\cdot)}\setminus\{t\},\\
        0&\text{otherwise,}
    \end{cases}\]
    and
    \[(R_\varepsilon)_(a_t)(s)=\begin{cases}
        R_a(s)&\text{if }R_a(s)=0\\
        R_a(s)+\varepsilon&\text{otherwise.}
    \end{cases}.\]
    Intuitively, for every action yielding a non-trivial distribution, we add actions that each
    prefer a single possible output.
    Note that this definition still yields a well-defined MDP and, indeed, an MDP that has the
    same underlying structure in the sense of Definition~\ref{def:sampling} whence $(M_\varepsilon)$ can be
    understood as converging to $M$ in the sampling sense.
    In particular, we have by~\Cref{cor:lfp-samp-cont} that $\mu f_{M_\varepsilon}\to\mu f$ for
    $\varepsilon\to 0$.

    On the other hand, let $n>N,v\in\mathbb{R}_+^S,$ and $s\in S$ be arbitrary but fixed.
    We write $a^*$ for the action maximizing $(R_n)_a(s)+\sum_{s'\in S}(T_n)_a(s,s')v(s')$.
    If $\supp{T_a(s,\cdot)}=\emptyset$, we obviously have
    \[f_n(v)(s)=(R_n)_a(s)\leq R_a(s)+\varepsilon=(R_\varepsilon)_a(s)\leq f_\varepsilon(v)(s).\]
    Similarly, if $\supp{T_a(s,\cdot)}=\{s'\}$, we have
    \begin{align*}
      f_n(v)(s) & =(R_n)_a(s)+(T_n)_a(s,s')v(s') \\
                & \leq R_a(s)+\varepsilon+\min(1,T_a(s,s')+\varepsilon)v(s')\\
                & =(R_\varepsilon)_a(s)+(T_\varepsilon)_a(s,s')v(s')\\
                & \leq f_\varepsilon(v)(s).
    \end{align*}
    Finally, if $|\supp{T_a(s,\cdot)}|>1$ and $s^*$ maximizes $v(s^*)$, we have
    \begin{eqnarray*}
        f_n(v)(s)&=&(R_n)_{a^*}(s)+\sum_{s'\in S}(T_n)_{a^*}(s,s')v(s')\\
        &\leq&(R_{a^*}(s)+\varepsilon)+\sum_{s'\in
          S}(T_\varepsilon)_{a^*_{s^*}}(s,s')v(s') + \mbox{} \\
        && \qquad \sum_{s'\in S}((T_n)_{a^*}(s,s')-(T_\varepsilon)_{a^*_{s^*}}(s,s'))v(s')\\
        &\leq& f_{\varepsilon}(v)(s)+\sum_{s'\in
          S\setminus\{s^*\}}((T_n)_{a^*}(s,s')-(T_\varepsilon)_{a^*_{s^*}}(s,s'))v(s')
        + \mbox{} \\
        && \qquad ((T_n)_{a^*}(s,s^*)-(T_\varepsilon)_{a^*_{s^*}}(s,s^*))v(s^*)\\
        &\leq& f_{\varepsilon}(v)(s)+v(s^*)\sum_{s'\in S}((T_n)_{a^*}(s,s')-(T_\varepsilon)_{a^*_{s^*}}(s,s'))\\
        &\leq& f_{\varepsilon}(v)(s)+v(s^*)((T_n)_{a^*}(s)-(T_\varepsilon)_{a^*_{s^*}}(s))\\
        &\leq& f_{\varepsilon}(v)(s),
    \end{eqnarray*}
    where we make use of
    $(T_n)_{a^*}(s,s')\geq (T_\varepsilon)_{a^*_{s^*}}(s,s')=T_a(s,s')-\nicefrac{\varepsilon}{|\supp{T_a(s,\cdot)}|-1}$
    for all $s'\in\supp{T_a(s,\cdot)}\setminus\{s^*\}$ and
    $(T_\varepsilon)_{a^*_{s^*}}(s)=1\geq(T_n)_{a^*}(s)$.

    Since $n>N,v,$ and $s$ were arbitrary, this shows $f_n\leq f_{\varepsilon}$ for all $n>N$,
    i.e., $\tilde{f}_n=\sup_{k>n}f_k\leq f_\varepsilon$ and thus
    $\mu\tilde{f}_n\leq\mu f_\varepsilon$ for all $n>N$.
    All in all, this yields the desired result
    \[\mu f\leq\lim_{n\to\infty}\mu\tilde{f}_n\leq\lim_{\varepsilon\to0}\mu f_\varepsilon=\mu f.\]
    \qed
\end{proof}

Using the validity of Condition~\eqref{eq:conv-condition} for MDPs, we
can derive the same result for SSGs using~\eqref{eq:value-equiv} and
an optimal min-player strategy.

\begin{corollaryrep}
    Given a sampling $\mathcal{G}=(G_n)$ of an SSG $G$, we have
    that \linebreak $\lim_{n\to\infty}\mu(\sup_{k\geq n}f_{G_k})=\mu f_G$.
    As a consequence, for any {\reg} Mann scheme $\mathcal{S}$, the
    sequence $(x_n)$ generated by $\mathcal{S}$ on $(f_{G_n})$
    converges to $\mu f_G$.
\end{corollaryrep}
\begin{proof}
    Let $\pi\colon S_{\min}\setminus F\to A$ denote the optimal policy for the $\min$-player, let
    $f^\pi$ and $f_n^\pi$ denote the Bellman operators of the corresponding MDP and $(x_n^\pi)$
    the sequence generated by the corresponding iteration using $(f_n^\pi)$ instead of $(f_n)$.
    In particular, we have $\mu f=\mu f^\pi$ and $f_n\leq f_n^\pi$ for all $n\in\mathbb{N}_0$
    whence also $x_n\leq x_n^\pi$ for all $n\in\mathbb{N}_0$.
    But then, using convergence for MDPs (\Cref{th:mdp}), we immediately get
    \[\limsup_{n\to\infty}x_n\leq\lim_{n\to\infty}x_n^\pi=\mu f^\pi=\mu f,\]
    concluding the proof.
    \qed
\end{proof}

\begin{remark}
  This result can be used to show convergence for a wider range
  of \enquote*{sampled} piecewise linear functions by making use of
  the fact that convergence of {\reg} generalized schemes for a
  sampled SSG $(f_{M_n})$ directly implies convergence for the
  sequences $(f_{M_n}^k)_n$ for any fixed $k>1$ ($k$-step Bellman
  operators).
  The appendix reports additional results, showing how
  this can be used to, among others, directly show
  convergence for state-action-value Bellman operators, where rewards are given to transitions.
  \cut{\[g_M(q)(s,a)=R_a(s)+\sum_{s'\in S_{\max}}T_a(s,s')\max_{a'\in A(s')}q(s',a')+\sum_{s'\in S_{\min}}T_a(s,s')\min_{a'\in A(s')}q(s',a').\]}
\end{remark}

\begin{toappendix}
  To conclude this chapter, let us look at a further generalisation of
  this result.  To do so, we notice the two following simple tricks,
  showing convergence for multi-step Bellman operators and for closed
  subfunctions.
    \begin{lemma}\label{lem:multi-step}
        Given $k\in\N$, a sampling $(\mathcal{G}_n)$ of an SSG $\mathcal{G}$, and a {\reg} generalized dampened Mann
        scheme $\mathcal{S}$,
        the iteration $(x_n)$ generated by $\mathcal{S}$ on $(f_{\mathcal{G}_n}^k)_n$ converges to
        $\mu f_{\mathcal{G}}$.
    \end{lemma}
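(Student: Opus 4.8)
The plan is to reduce the claim to \Cref{thm:monotone}(\ref{thm:monotone:2}) applied to the sequence $(f_{\mathcal{G}_n}^k)_n$, invoking Condition~\eqref{eq:conv-condition} for sampled SSGs, which was established in the corollary above. Throughout, $\mu$ denotes the usual least-fixpoint operator $\mu f=\mu\overline f$, i.e.\ the least pre-fixpoint on the complete lattice $\Rpi^d$, in which all suprema and infima used below exist.

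First I would collect three elementary facts. (1) Each $f_{\mathcal{G}_n}^k$ is monotone and non-expansive, being a $k$-fold composition of the monotone non-expansive Bellman operator $f_{\mathcal{G}_n}$, and $f_{\mathcal{G}_n}^k\to f_{\mathcal{G}}^k$ pointwise: writing $g_n=f_{\mathcal{G}_n}$, $g=f_{\mathcal{G}}$, non-expansiveness of $g_n$ gives $\|g_n^{k}(v)-g^{k}(v)\|\le\|g_n^{k-1}(v)-g^{k-1}(v)\|+\|g_n(g^{k-1}(v))-g(g^{k-1}(v))\|$, and both summands vanish by induction on $k$ and the pointwise convergence $g_n\to g$. (2) For every monotone $h\colon X\to X$ one has $\mu(h^k)=\mu h$: any pre-fixpoint of $h$ is (by monotonicity) a pre-fixpoint of $h^k$, so $\mu(h^k)\le\mu h$; conversely, for $x^*\coloneqq\mu(h^k)$ the point $y\coloneqq\inf\{x^*,h(x^*),\dots,h^{k-1}(x^*)\}$ satisfies $h(y)\le y$, so $\mu h\le y\le x^*$. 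In particular $\mu f_{\mathcal{G}}^k=\mu f_{\mathcal{G}}$. (3) For monotone maps $\sup_m(h_m^k)\le(\sup_m h_m)^k$ pointwise, by repeated use of monotonicity.

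The core step is then to verify the hypothesis of \Cref{thm:monotone}(\ref{thm:monotone:2}) for $(f_{\mathcal{G}_n}^k)_n$, that is, $\lim_{n\to\infty}\mu\big(\sup_{m\ge n}f_{\mathcal{G}_m}^k\big)=\mu f_{\mathcal{G}}^k$. For the upper bound, (3) and (2) give $\mu\big(\sup_{m\ge n}f_{\mathcal{G}_m}^k\big)\le\mu\big((\sup_{m\ge n}f_{\mathcal{G}_m})^k\big)=\mu\big(\sup_{m\ge n}f_{\mathcal{G}_m}\big)$, and the right-hand side tends to $\mu f_{\mathcal{G}}$ by Condition~\eqref{eq:conv-condition} for SSGs. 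For the lower bound, set $h_n\coloneqq\inf_{m\ge n}f_{\mathcal{G}_m}$, which is again monotone and non-expansive; since $h_n^k\le f_{\mathcal{G}_m}^k\le\sup_{m\ge n}f_{\mathcal{G}_m}^k$ for every $m\ge n$, we get $\mu h_n=\mu(h_n^k)\le\mu\big(\sup_{m\ge n}f_{\mathcal{G}_m}^k\big)$, and because $h_n\uparrow f_{\mathcal{G}}$ pointwise from below (as $f_{\mathcal{G}_m}\to f_{\mathcal{G}}$), Scott-continuity of the least-fixpoint operator yields $\mu h_n\to\mu f_{\mathcal{G}}$ — exactly as in the proof of \Cref{thm:monotone}(\ref{thm:monotone:1}). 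Sandwiching gives $\mu f_{\mathcal{G}}\le\liminf_n\mu\big(\sup_{m\ge n}f_{\mathcal{G}_m}^k\big)\le\limsup_n\mu\big(\sup_{m\ge n}f_{\mathcal{G}_m}^k\big)\le\mu f_{\mathcal{G}}$, so the limit equals $\mu f_{\mathcal{G}}=\mu f_{\mathcal{G}}^k$ by (2). Hence \Cref{thm:monotone}(\ref{thm:monotone:2}), applied to $(f_{\mathcal{G}_n}^k)_n$ with pointwise limit $f_{\mathcal{G}}^k$, yields $x_n\to\mu f_{\mathcal{G}}^k=\mu f_{\mathcal{G}}$.

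The main obstacle is really just this two-sided estimate: once one notices that $\sup_m(h_m^k)$ is squeezed between $(\inf_m h_m)^k$ and $(\sup_m h_m)^k$ and that iterating a monotone map leaves its least fixpoint unchanged, the whole statement collapses to the sampling-continuity results already proved for SSGs, and no new analysis of the game structure is needed.
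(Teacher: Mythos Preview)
Your proof is correct and takes a genuinely different route from the paper's. The paper argues operationally: it dilates the sampling to $\tilde{\mathcal G}_n=\mathcal G_{\lfloor n/k\rfloor}$ and the scheme to one that is idle on $k-1$ out of every $k$ steps, checks that the dilated scheme is still progressing, and identifies $(x_n)$ with a subsequence of the resulting single-step SSG iteration, so that the SSG convergence result applies directly. You instead stay at the level of \Cref{thm:monotone}(\ref{thm:monotone:2}) and verify Condition~\eqref{eq:conv-condition} for the iterated operators $(f_{\mathcal G_n}^k)_n$ themselves, using the two elementary facts $\mu(h^k)=\mu h$ and the sandwich $h_n^k\le\sup_{m\ge n} f_{\mathcal G_m}^k\le(\sup_{m\ge n} f_{\mathcal G_m})^k$ with $h_n=\inf_{m\ge n}f_{\mathcal G_m}$; the left end converges to $\mu f_{\mathcal G}$ by Scott continuity and the right end by the SSG corollary. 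Your argument is more portable---it would apply verbatim to any sequence of monotone non-expansive maps for which Condition~\eqref{eq:conv-condition} holds, not just Bellman operators---and it sidesteps the index bookkeeping needed to thread the dilated scheme through the definition of progressing. The paper's reduction, by contrast, makes the conceptual point that the $k$-step iteration is literally an instance of the single-step framework after a reparametrisation of time.
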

    \begin{proof}
        Let $k\in\N$, a sampling $(\mathcal{G}_n)$ of the SSG $\mathcal{G}$, and a {\reg} generalized dampened Mann
        scheme $\mathcal{S}=((\alpha_n),(\beta_n))$ be given.

        We notice that also $(\tilde{\mathcal{G}}_n)$ defined as
        \[\tilde{\mathcal{G}}_n=\mathcal{G}_{\lfloor n/k\rfloor}\]
        is still a sampling of $G$.

        Defining the scheme $\tilde{\mathcal{S}}=((\tilde{\alpha}_n),(\tilde{\beta}_n))$ as
        \[(\tilde{\alpha}_n,\tilde{\beta}_n)=\begin{cases}
            (\alpha_{(n-1)/k},\beta_{(n-1)/k})&\text{if }k\mid n-1\\
            (0,0)&\text{otherwise,}
        \end{cases}\]
        we notice that $\tilde{\mathcal{S}}$ is still {\reg} (choosing
        $\tilde{m}_\ell=k\cdot m_{\ell}$).

        As such, we have that the sequence $(\tilde{x}_n)$ generated by $\tilde{\mathcal{S}}$ on
        $(f_{\tilde{\mathcal{G}}_n})$ converges to $\mu f_{\mathcal{G}}$.
        But, by construction, we also have $x_n=\tilde{x}_{n\cdot k}$, showing that also $x_n\to\mu f_{\mathcal{G}}$.
        \qed
    \end{proof}

    For a function $f\colon \Rp^d\to\Rp^d$, we define its dependency
    graph as $(\{1,\dots,d\},\to)$ with edges / relation $\to$ given by
    \[i\to j\quad\text{iff}\quad\exists x,y\in\Rp^d\colon f(x)(j)\neq f(y)(j)\Rightarrow\exists k\neq i\colon x(i)\neq y(i).\]
    Now, for $I\subseteq\{1,\dots,d\}$, we write $\hat{x}(i)=x(i)$ for all $i\in I$ and
    $\hat{x}(i)=0$ otherwise, and define $f^I\colon \Rp^I\to\Rp^I$ as
    \[f^I(x)=(f(\hat{x})(i))_{i\in I}.\]

    Then, if $I\subseteq\{1,\dots,d\}$ is an \emph{independent} subset of components,
    meaning that for all $i\in I$ and $j\in\{1,\dots,d\}$ with $j\to i$, we have $j\in I$, then
    we have for any $x\in X$ that
    \[f^I((x(i))_{i\in I})=(f(x)(i))_{i\in I}.\]
    In particular, we have that $\mu f^I=(\mu f(i))_{i\in I}$.

    Furthermore, consider a (generalized) dampened Mann scheme $\mathcal{S}$ generating an iteration
    $(x_n)$ on a function sequence $(f_n)$ with a shared dependency graph.
    Then, for any independent subset $I$ of components, we have that the iteration $(x_n^I)_n$
    generated by the scheme $\mathcal{S}^I$ (which is just the parameter sequences of $\mathcal{S}$
    restricted to the components $I$) on $(f_n^I)_n$ fulfills $x_n^I=(x_n(i))_{i\in I}$, showing
    that convergence for the original function sequence naturally implies convergence for
    restrictions to independent subsets of components.

    These two ideas can now be combined to show convergence of {\reg} generalized Mann schemes for
    even greater classes of piecewise linear functions by identifying them with a restriction of the
    multi-step Bellman operator of an SSG.

    In order to demonstrate this, we consider the state-action Bellman
    operator $g_{\mathcal{G}}\colon \Rp^{S\times A}\to\Rp^{S\times A}$
    of an SSG $\mathcal{G}=(S,A,T,R,S_{\max},S_{\min})$:
    \begin{eqnarray*}
      g_{\mathcal{G}}(q)(s,a) &=& R_a(s)+\sum_{s'\in
        S_{\max}}T_a(s,s')\max_{a'\in A(s')}q(s',a')+\mbox{}\\
      && \qquad \sum_{s'\in
          S_{\min}}T_a(s,s')\min_{a'\in A(s')}q(s',a')
    \end{eqnarray*}

    This function is obviously not of the form of the state Bellman operator $f_{\mathcal{G}}$ used
    above, so the results above do not directly apply, despite using the \enquote*{same components}.
    Yet, with the two ideas above, showing convergence becomes rather straight-forward.

    \begin{corollary}
        Given a sampling $\mathcal{G}=(G_n)$ of an SSG $G$ and a {\reg} Mann scheme $\mathcal{S}$,
        the sequence $(x_n)$ generated by $\mathcal{S}$ on $(g_{G_n})$ converges to $\mu g_G$.
    \end{corollary}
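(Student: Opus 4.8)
The plan is to exhibit the state-action Bellman operator $g_{\mathcal{G}}$ as the restriction, to an independent subset of components, of the \emph{two-step} Bellman operator of an auxiliary SSG, and then to invoke \Cref{lem:multi-step} together with the observations on independent subsets of components developed above.

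First I would build, from $\mathcal{G}=(S,A,T,R,S_{\max},S_{\min})$, an auxiliary SSG $\hat{\mathcal{G}}$ on the state set $\hat S = \{\langle s,a\rangle \mid s\in S,\ a\in A(s)\} \cup \{[s]\mid s\in S\}$. A state $\langle s,a\rangle$ has a single action yielding reward $R_a(s)$ and moving to $[s']$ with probability $T_a(s,s')$ (hence terminating with probability $1-T_a(s)$). A state $[s]$ with $s\in S_{\max}$ (resp. $s\in S_{\min}$) is a $\max$-state (resp. $\min$-state) with one action per $a\in A(s)$, each of reward $0$ moving deterministically to $\langle s,a\rangle$; if $s$ is final, so is $[s]$. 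A direct computation then gives
\[
  f_{\hat{\mathcal{G}}}^2(v)(\langle s,a\rangle)
  = R_a(s) + \sum_{s'\in S_{\max}}T_a(s,s')\max_{a'\in A(s')}v(\langle s',a'\rangle)
          + \sum_{s'\in S_{\min}}T_a(s,s')\min_{a'\in A(s')}v(\langle s',a'\rangle),
\]
so on the components $I=\{\langle s,a\rangle\mid s\in S,\ a\in A(s)\}$ the two-step Bellman operator of $\hat{\mathcal{G}}$ coincides with $g_{\mathcal{G}}$ under the identification $q(s,a)=v(\langle s,a\rangle)$. Moreover $I$ is an \emph{independent} subset of components for $f_{\hat{\mathcal{G}}}^2$ (though not for $f_{\hat{\mathcal{G}}}$ itself), since the right-hand side above depends on $v$ only through the $I$-components; the same holds with any sampled $\hat{\mathcal{G}}_n$ in place of $\hat{\mathcal{G}}$ (supports can only shrink along a sampling), so the dependency graph of $\hat{\mathcal{G}}$ serves as a common over-approximation. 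Hence $g_{\mathcal{G}} = (f_{\hat{\mathcal{G}}}^2)^I$ and, using $\mu f^I = (\mu f(i))_{i\in I}$ for independent $I$ together with $\mu(f^2)=\mu f$, we obtain $\mu g_{\mathcal{G}} = (\mu f_{\hat{\mathcal{G}}}(\langle s,a\rangle))_{s\in S,\, a\in A(s)}$.

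Next I would observe that the sampling $(G_n)$ of $G$ induces a sampling $(\hat G_n)$ of $\hat{\mathcal{G}}$: out of $\langle s,a\rangle$ one uses the sampled data of $G_n$ at $(s,a)$, while the $[s]$-transitions stay deterministic and reward-free in all $\hat G_n$; the four structural conditions of Definition~\ref{def:sampling} and the convergences $T_n\to T$, $R_n\to R$ then carry over immediately. Extend the given {\reg} Mann scheme $\mathcal{S}$ to a generalized scheme $\hat{\mathcal{S}}$ on $\Rp^{\hat S}$ by using the same scalar parameter sequences in every component; $\hat{\mathcal{S}}$ is still {\reg} (a non-generalized scheme viewed as generalized satisfies \eqref{MainTheoremExtraCondition} with $m_k=k$). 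Applying \Cref{lem:multi-step} with $k=2$ to $\hat{\mathcal{G}}$, $(\hat G_n)$ and $\hat{\mathcal{S}}$, the iteration generated by $\hat{\mathcal{S}}$ on $(f_{\hat G_n}^2)_n$ converges to $\mu f_{\hat{\mathcal{G}}}$. Restricting to the independent subset $I$, the discussion on independent subsets of components shows that the $I$-components of this iteration are exactly the iteration generated by $\hat{\mathcal{S}}^I=\mathcal{S}$ on $((f_{\hat G_n}^2)^I)_n = (g_{G_n})_n$, and that this restricted iteration converges to $(\mu f_{\hat{\mathcal{G}}}(i))_{i\in I} = \mu g_G$, which is the claim.

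The one genuinely delicate point is getting the auxiliary SSG $\hat{\mathcal{G}}$ exactly right, so that $f_{\hat{\mathcal{G}}}^2$ restricted to the $\langle s,a\rangle$-components is \emph{literally} $g_{\mathcal{G}}$ and $I$ is provably independent for the two-step operator even though it is not for the one-step one; once the construction is fixed, the remaining checks — that $(\hat G_n)$ is a legitimate sampling, that $\hat{\mathcal{S}}$ stays {\reg}, and that least fixpoints and iterations restrict correctly to $I$ — are routine bookkeeping delegated to \Cref{lem:multi-step} and the independent-subset machinery.
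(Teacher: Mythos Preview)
Your proposal is correct and follows essentially the same approach as the paper: both construct an auxiliary SSG by splitting each state into a ``choice'' node $[s]$ (the paper writes $s$) and ``post-choice'' nodes $\langle s,a\rangle$ (the paper writes $(s,a)$), identify $g_G$ with the restriction of the two-step Bellman operator $f_{\hat{\mathcal{G}}}^2$ to the independent subset of post-choice components, verify that the sampling lifts to the auxiliary SSG, and conclude via \Cref{lem:multi-step} together with the independent-subset machinery. Your write-up is slightly more explicit about why $I$ is independent for $f^2$ but not for $f$, and about extending the scheme to the larger state space, but the argument is the same.
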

    \begin{proof}
        For $G=(S,A,T,R,S_{\max},S_{\min})$ (and, analogously for $G_n$), we now define an SSG
        $\tilde{G}=(\tilde{S},\tilde{A},\tilde{T},\tilde{R},\tilde{S}_{\max},\tilde{S}_{\min})$ by
        splitting the probabilistic and non-deterministic part as follows:
        \begin{itemize}
            \item $\tilde{S}=S\cup(S\times A)$, i.e., all states and all state-action pairs
            \item $\tilde{A}(s)=A(s)$ for $s\in S$ and $\tilde{A}(s,a)=\{*\}$ for $(s,a)\in S\times A$
            \item $\tilde{T}$ is defined as follows (everywhere else $0$)
            \begin{itemize}
                \item $\tilde{T}_a(s,(s,a))=1$, i.e., choosing action $a$ in state $s$ deterministically yields the state-action pair $(s,a)$
                \item $\tilde{T}_*((s,a),s')=T_a(s,s')$, i.e., from state-action pair $(s,a)$, the next state is randomly chosen as when choosing action $a$ in state $s$ of the original SSG
            \end{itemize}
            \item $\tilde{R}$ is defined as follows
            \begin{itemize}
                \item $\tilde{R}_a(s,(s,a))=0$ for all $s\in S,a\in A$
                \item $\tilde{R}_*((s,a),s')=R(s,a)$ for all $s,s'\in S,a\in A$, i.e., the reward is granted when leaving the state-action pair
            \end{itemize}
            \item $\tilde{S}_{\min}=S_{\min}\cup(S\times A),\tilde{S}_{\max}=S_{\max}$\footnote{This choice is arbitrary as the state-action pairs offer no non-deterministic choice.}
        \end{itemize}
        Note that this gives us
        \[f_{\tilde{G}}(v)(s)=\max_{a\in A(s)}v(s,a)\]
        on states $s\in S$ and
        \[f_{\tilde{G}}(v)(s,a)=R_a(s)+\sum_{s'\in S}T_a(s,s')v(s')\]
        on state-action pairs $(s,a)\in S\times A$.

        In particular, we have that
        \[f_{\tilde{G}}^2(v)(s,a)=R_a(s)+\sum_{s'\in S}T_a(s,s')\max_{a'\in A(s')}v(s',a')=g_G(v)(s,a)\]
        for all state-action pairs $(s,a)\in S\times A$ (and, similarly, $f_{\tilde{G}}^2(v)(s)=f_G(v)(s)$ for all $s\in S$).

        Now, without loss of generality, we can assume that the structure (in the sense of~\Cref{def:sampling})
        of all SSGs $G_n$ is equal to the one of $G$.
        It is easy to check that, in this case, all $f_{\tilde{G}_n}^2$ share their dependency graph
        and that $S\times A$ (and also $S$) is an independent subset of components.

        But then, by~\Cref{lem:multi-step}, we have that any {\reg} generalized Mann scheme applied
        to $(f_{\tilde{G}_n}^2)$ yields a sequence $(x_n)$ convergent to $\mu f_{\tilde{G}}$.
        In particular, restricting the iteration to the shared independent subset $S\times A$, this
        gives convergence also for the state-action Bellman operators $(g_{\mathcal{G}_n})$.
        \qed
    \end{proof}
\end{toappendix}

\section{Numerical Experiments}
\label{se:numerical}

In this section we illustrate a number of experiments 
comparing the convergence behaviour of (relaxed) Mann-Kleene schemes as
presented in~\cite{BGKPW:CAFDM} with schemes working with vanishing or
divergent learning rates as well as chaotic schemes. The aim of these experiments
is to provide some insights on how the generalisations to dampened Mann
iteration in the paper affect the convergence speed when applied
with a randomized strategy for chaotic updates.
In particular, we cannot expect an improvement in efficiency that could
potentially be expected using more informed heuristics when choosing
which components to update.

We tested all iteration schemes on the same set of 50 randomly
generated SSGs $G$ with 15 $\min$- and 15 $\max$-player states, at
most $5$ actions per state, each of them with a normalized value
$\|\mu f_G\|=1$ and for which the Kleene iteration on $f_G$
\enquote*{converges} in at most $10000$ steps (we assume convergence
when changes are below $10^{-8}$).

\begin{figure}[t]
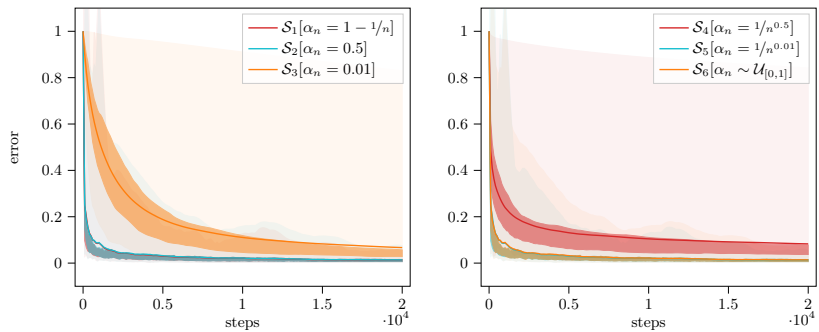

    \centering
    \begin{subfigure}[b]{.45\textwidth}
      \centering
\clearpage{}%


\clearpage{}%

    \end{subfigure}    
    \caption{Results for the experiments of non-chaotic iterations on
      50 randomly generated SSGs.  Lines denote the mean error,
      the half-opaque area the 25th to 75th percentile, the weakly
      visible area the minimum to maximum area. Results for
      $\mathcal{S}_1$ and $\mathcal{S}_4$ are only barely visible as
      they are almost identical to the ones from $\mathcal{S}_2$ and
      $\mathcal{S}_6$, respectively.}
    \label{fig:first-experiment}
\end{figure}

\paragraph{Vanishing and diverging learning rates.}
To test the influence of vanishing or diverging learning rates
on the convergence rate, we compared the performance of the
six iteration schemes shown in~\Cref{tab:first-experiment}.
The scheme $\mathcal{S}_1$ is a Mann-Kleene scheme, while $\mathcal{S}_2$ and $\mathcal{S}_3$  are relaxed
Mann-Kleene schemes.
Instead $\mathcal{S}_4$ and
$\mathcal{S}_5$ have vanishing learning rate and 
$\mathcal{S}_6$ chooses a random learning rate at each
step independently w.r.t. a uniform distribution, and thus the sequence of learning rates is 
almost surely non-converging.
Note that convergence for the first three schemes (at least for MDPs)
is covered by the results in~\cite{BGKPW:CAFDM} while the last three require the results in the present paper.

\begin{table}[b]
  \centering
  \begin{tabular}[h]{|c|c|c|c|c|c|c|}
    \hline
    $\mathcal{S}$&$\mathcal{S}_1$&$\mathcal{S}_2$&$\mathcal{S}_3$&$\mathcal{S}_4$&$\mathcal{S}_5$&$\mathcal{S}_6$\\
    \hline
    \mystrut$\alpha_n$&$1-\nicefrac{1}{n}$&$0.5$&$0.01$&$\nicefrac{1}{n^{0.5}}$&$\nicefrac{1}{n^{0.01}}$&$\sim\mathcal{U}_{[0,1]}$\\
    \mystrut$\beta_n$&$\nicefrac{1}{n}$&$\nicefrac{1}{n}$&$\nicefrac{1}{n}$&$\nicefrac{1}{n}$&$\nicefrac{1}{n}$&$\nicefrac{1}{n}$ \\ \hline
  \end{tabular}
  \medskip
  \caption{Dampened Mann Schemes used in the experiments.}
  \label{tab:first-experiment}
\end{table}

In the experiments, before each Mann iteration step, the SSGs $\mathcal{G}$ are sampled by performing $30$ model steps for randomly chosen state-action pairs.

The results of these experiments are
in~\Cref{fig:first-experiment}.  Observe that both vanishing and random
(diverging) learning rates yield similar results as (relaxed)
Mann-Kleene parameters.

Also, the schemes that keep having a higher learning rate (for longer)
perform better (higher constant learning rates or lower exponent
$\alpha$ for learning rates of the form $\nicefrac{1}{n^\alpha}$). This is
expected since we are dealing with faultless sampling of SSGs and the
advantage of lower learning rates does not lie in faster convergence
but in higher robustness to \enquote*{irregular} approximations
(see~\Cref{ex:weigthed-convergence}).

\paragraph{Chaotic iteration.}

\begin{figure}[t]
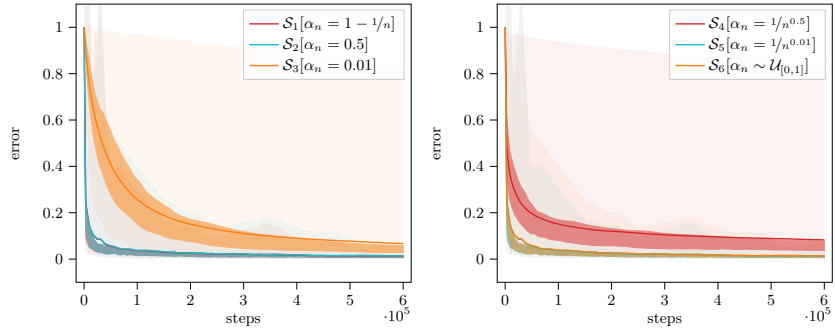

    \centering
    \begin{subfigure}[b]{.45\textwidth}
      \centering
\clearpage{}%


\clearpage{}%

    \end{subfigure}
    
    \caption{Results for the experiments of chaotic iterations on 50 randomly generated SSGs.
    Again, the results for $\mathcal{S}_1$ and
        $\mathcal{S}_4$ are only barely visible as they are almost identical to the ones from
        $\mathcal{S}_2$ and $\mathcal{S}_6$, respectively.}
      \label{fig:second-experiment}
\end{figure}
To test the influence of chaotic updates on the convergence rate,
we repeated the experiments with chaotic variants of the above
iteration schemes.
As outline at the end of \S\ref{ss:gen-mann}, at each step, the chaotic iteration performs an
update for one single (independently) randomly chosen state,
updating the learning rate (and dampening factor) only for this
state.
Thus, to have a \enquote*{full sweep} over the states, the
chaotic variants need (at least) $30$ steps (while the
non-chaotic variants above perform $30$ updates in one step).
To have a fair comparison to the non-chaotic variants, the
sampling $\mathcal{G}$ only simulates a single model step
for a randomly chosen state-action pair before each iteration
step.
In total, after $30$ steps, the chaotic schemes thus have
the same information and the same number of updates (thus, the
same amount of computation) as a single step of the non-chaotic
variants, while allowing for intermediate results.

The results of these experiments are in~\Cref{fig:second-experiment}.
 It can be seen directly that the schemes all have almost the
same convergence speed as the non-chaotic variants (keeping in mind
that the non-chaotic variants perform $30$ updates per step), even
if we use  a random non-informed choice of components to update in
each step.
As a consequence, the experiments suggest that performing
a chaotic iteration (randomly) comes at no additional cost
while allowing for intermediate results even when the
computation with non-chaotic variants would be infeasible.
It should further be noted that, in most practical use
cases, the components to update might be chosen in a more
informed manner, e.g., by taking into account which components
had been updated, which might result in faster convergence.

\section{Conclusion}
\label{se:conclusions}
Mann iteration is a classical technique for generating a sequence
which, under suitable hypotheses, converges to a fixpoint of a
continuous function starting from any initial
state~\cite{b:iterative-approximation-fixed-points}. Dampened versions
are considered in~\cite{KIM200551,Yao2008StrongCO} where convergence
to a fixpoint is shown in
a uniformly smooth Banach space and Hilbert spaces, under suitable assumptions on the parameter sequences.
In~\cite{BGKPW:CAFDM}, we proposed the use of dampened Mann iteration for
dealing with functions which can only be approximated. This work
sensibly deviates from~\cite{KIM200551,Yao2008StrongCO} as the
intended applications, for approximating fixpoints of
functions arising from quantitative models like MDPs and SSGs,
lead to focus on the case of
monotone non-expansive functions in (finite-dimensional) Banach spaces
with the supremum norm, which are neither uniformly smooth nor Hilbert
spaces.

Our paper takes the same setting as~\cite{BGKPW:CAFDM} and generalizes
its results. The generalisations include the possibility of having
learning rates converging to $0$ or not converge at all, the
possibility of varying parameters across dimensions, thus enabling
chaotic iteration and the proof of convergence of these schemes for
simple stochastic games approximated via sampling.

We provided some numerical results, showing that the chaotic
iteration (with random choice of updated component) yields almost
identical results in terms of convergence of standard dampened Mann
iteration, while giving more flexibility, e.g., opening the way to
parallel and distributed implementations.
We plan a further exploration of effective chaotic iteration strategies, including data-driven approaches to select the components to update at each step based on statistical properties of the model, possibly tuning the learning rate and dampening parameters based on observed convergence behaviour during execution.

We furthermore plan to investigate applications of our results to the
computation of behavioural metrics, in particular probabilistic
bisimilarity distances. In
fact,~\cite{bblmtb:prob-bisim-distance-automata-journal} shows how to
reduce this problem to the solution of a simple stochastic
game. Computing behavioural metrics in the approximated setting is
also
discussed~\cite{kt:approximate-bisimulation-minimization,FKPB:RPBLMC}
where the authors observe that a small perturbation of the transition
probabilities can drastically change the distance apparently hindering
the possibility of properly approximating the distance when such
probabilities can be only estimated.  Our results suggest that,
instead, one can use dampened Mann iteration for obtaining
increasingly better approximations of the distance.

We are also interested in considering the problem of approximating the least fixpoint in scenarios,
where the functions do not necessarily converge, but converge in the
limit-average, providing a close match with reinforcement learning
algorithms such as Q-learning \cite{WD:QL}. For this one might be able
to draw inspiration from theory of stochastic
approximation~\cite{bt:neuro-dynamic-programming,b:random-iterative-models,kc:stochastic-approximation-methods},
going back to~\cite{rm:stochastic-approximation} that deals with the
case where a function contains an error term with expected value zero.

\bibliographystyle{plain}
\bibliography{references}

\end{document}